\newtheorem{thm}{Theorem}[section]
\newtheorem{lem}[thm]{Lemma}
\newtheorem{rem}[thm]{Remark}
\numberwithin{equation}{section}
\title{A reformulation of the generalized $q$-Painlev\'{e} VI system with $W(A^{(1)}_{2n+1})$ symmetry}
\date{}
\author{Takao Suzuki}
\address{Department of Mathematics, Kindai University, 3-4-1, Kowakae, Higashi-Osaka, Osaka 577-8502, Japan}
\email{suzuki@math.kindai.ac.jp}
\begin{document}

\maketitle

\begin{abstract}
In the previous work we introduced the higher order $q$-Painlev\'{e} system $q$-$P_{(n+1,n+1)}$ as a generalization of the Jimbo-Sakai's $q$-Painlev\'{e} VI equation.
It is derived from a $q$-analogue of the Drinfeld-Sokolov hierarchy of type $A^{(1)}_{2n+1}$ and admits a particular solution in terms of the Heine's $q$-hypergeometric function ${}_{n+1}\phi_n$.
However the obtained system is insufficient as a generalization of $q$-$P_{\rm{VI}}$ due to some reasons.
In this article we rewrite the system $q$-$P_{(n+1,n+1)}$ to a more suitable one.

Key Words: Discrete Painlev\'{e} equations, Affine Weyl groups, Basic hypergeometric functions.

2010 Mathematics Subject Classification: 39A13, 33D15, 34M55.
\end{abstract}

\section{Introduction}\label{sec:Intro}

Several generalizations of the Painlev\'e VI equation ($P_{\rm{VI}}$) have been proposed (\cite{FS1,G,K,Sak2,Sas,Su2,Su3,Tsu2}).
We focus on the higher order Painlev\'e system $P_{(n+1,n+1)}$ given in \cite{FS1,Su2}, or equivalently the Schlesinger system $\mathcal{H}_{n+1,1}$ given in \cite{Tsu2}, among them.
It can be regarded as a generalization from a viewpoint of a particular solution in terms of the hypergeometric function ${}_{n+1}F_n$ (\cite{Su1,Tsu3}).
The aim of this article is to introduce its $q$-analogue.
This $q$-difference equation becomes a generalization of the $q$-Painlev\'{e} VI equation ($q$-$P_{\rm{VI}}$) given in \cite{JS}.

The investigation of generalizations of $q$-$P_{\rm{VI}}$ has been developed in recent years (\cite{M2,NY,Sak1,Su4,Tsu1}).
In the previous work \cite{Su4} we proposed the higher order $q$-Painlev\'e system $q$-$P_{(n+1,n+1)}$, whose explicit formula will be given in Section \ref{sec:Review_qFST}, as a $q$-analogue of $P_{(n+1,n+1)}$.
It is derived from the $q$-Drinfeld-Sokolov hierarchy of type $A^{(1)}_{2n+1}$, contains $q$-$P_{\rm{VI}}$ in the case of $n=1$ and admits a particular solution in terms of the $q$-hypergeometric function ${}_{n+1}\phi_n$.
However this system is insufficient as a generalization of $q$-$P_{\rm{VI}}$ due to the following two reasons.

\begin{enumerate}
\item
The system $q$-$P_{(n+1,n+1)}$ is probably reducible and reduces to a system of $2n$-th order.
\item
We do not express the backward $q$-shifts $(\underline{x_i},\underline{y_i})$ as functions in $(x_i,y_i)$.
\end{enumerate}

The aim of this article is to solve those two problems.
We reduce the system $q$-$P_{(n+1,n+1)}$ to a more suitable one as a generalization of $q$-$P_{\rm{VI}}$.

This article is organized as follows.
In Section \ref{sec:Review_qFST} we recall the definition of $q$-$P_{(n+1,n+1)}$ and its properties, namely a Lax pair, an affine Weyl group symmetry, a relationship with $q$-$P_{\rm{VI}}$ and a particular solution in terms of ${}_{n+1}\phi_n$.
In Section \ref{sec:Main_result} we formulate a system of $q$-difference equations of $2n$-th order which is equivalent to $q$-$P_{\rm{VI}}$ in the case of $n=1$.
It is the main result of this article.
In Section \ref{sec:Affine_Weyl} we describe an action of the affine Weyl group on the $2n$-th order system given in the previous section.

\section{Review: Higher order $q$-Painlev\'{e} system $q$-$P_{(n+1,n+1)}$}\label{sec:Review_qFST}

In the following we use notations
\[
	\overline{x(t)} = x(qt),\quad
	\underline{x(t)} = x(q^{-1}t),
\]
where $t,q\in\mathbb{C}$ and $|q|<1$.

The system $q$-$P_{(n+1,n+1)}$ given in \cite{Su4} is expressed as a system of $q$-difference equations
\[
	\left\{\begin{array}{l}
		\displaystyle x_{i-1} - x_i = \frac{b_{i-1}\underline{x_{i-1}}}{1+\underline{x_{i-1}}y_{i-1}} - \frac{a_i\underline{x_i}}{1+\underline{x_i}y_{i-1}} \\[12pt]
		\displaystyle \underline{y_{i-1}} - \underline{y_i} = \frac{a_iy_{i-1}}{1+\underline{x_i}y_{i-1}} - \frac{b_iy_i}{1+\underline{x_i}y_i}
	\end{array}\right.\quad (i=1,\ldots,n+1),
\]
with a constraint
\begin{equation}\label{eq:qFST_constraint}
	\prod_{i=1}^{n+1}a_i\frac{1+\underline{x_i}y_i}{1+\underline{x_i}y_{i-1}} = q^{-n/2},
\end{equation}
where
\[
	b_0 = qb_{n+1},\quad x_0 = tx_{n+1},\quad y_0 = \frac{q}{t}y_{n+1}.
\]

\begin{rem}
In the previous work \cite{Tsu1} a higher order generalizations of $q$-$P_{\rm{VI}}$ were presented by Tsuda.
Since his $q$-Painlev\'e system can be regarded as a $q$-analogue of the Schlesinger system $\mathcal{H}_{n+1,1}$, we conjecture that his system coinsides with $q$-$P_{(n+1,n+1)}$.
However a relationship between both $q$-Painlev\'e systems has not been clarified yet.
\end{rem}

We derived the system $q$-$P_{(n+1,n+1)}$ by a similarity reduction from the $q$-Drinfeld-Sokolov hierarchy of type $A^{(1)}_{2n+1}$.
Hence the following theorem is obtained naturally via the construction of the system.

\begin{thm}[\cite{Su4}]
The system $q$-$P_{(n+1,n+1)}$ is given as the compatibility conditon of a system of linear $q$-difference equations 
\begin{equation}\label{eq:Lax_qFST}
	\Psi(q^{-1}z,t) = M(z,t)\Psi(z,t),\quad
	\Psi(z,q^{-1}t) = B(z,t)\Psi(z,t).
\end{equation}
with $(2n+2)\times(2n+2)$ matrices
\[
	M(z,t) = \begin{bmatrix}a_1&\varphi_1&-1\\&b_1&\varphi_2&-1\\&&a_2&\varphi_3&-1\\&&&b_2&\varphi_4\\&&&&&\ddots\\&&&&&&\varphi_{2n-1}&-1\\&&&&&&b_n&\varphi_{2n}&-1\\-tz&&&&&&&a_{n+1}&\varphi_{2n+1}\\\varphi_0z&-z&&&&&&&b_{n+1}\end{bmatrix},
\]
and
\[
	B(z,t) = \begin{bmatrix}u_1&v_1&-1\\&u_2&v_2&0\\&&u_3&v_3&-1\\&&&u_4&v_4\\&&&&&\ddots\\&&&&&&v_{2n-1}&-1\\&&&&&&u_{2n}&v_{2n}&0\\-tz&&&&&&&u_{2n+1}&v_{2n+1}\\v_0z&0&&&&&&&u_{2n+2}\end{bmatrix},
\]
where
\[\begin{split}
	&\varphi_{2i-2} = x_{i-1} - x_i,\quad
	\varphi_{2i-1} = y_{i-1} - y_i\quad (i=1,\ldots,n+1), \\
	&u_{2i-1} = \frac{a_i}{1+\underline{x_i}y_{i-1}},\quad
	u_{2i} = 1 + \underline{x_i}y_i,\quad
	v_{2i-1} = -y_i\quad (i=1,\ldots,n+1), \\
	&v_0 = t\underline{x_{n+1}},\quad
	v_{2i} = \underline{x_i}\quad (i=1,\ldots,n).
\end{split}\]
\end{thm}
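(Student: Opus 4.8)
The plan is to prove the statement by verifying the compatibility condition of the linear system \eqref{eq:Lax_qFST} directly and then reading off the nonlinear $q$-difference equations from it. Compatibility means that the two ways of computing $\Psi(q^{-1}z,q^{-1}t)$ must agree. Shifting first in $z$ and then in $t$ gives $\Psi(q^{-1}z,q^{-1}t)=M(z,q^{-1}t)B(z,t)\Psi(z,t)$, while shifting first in $t$ and then in $z$ gives $\Psi(q^{-1}z,q^{-1}t)=B(q^{-1}z,t)M(z,t)\Psi(z,t)$. Since the $z$- and $t$-shifts commute and $\Psi$ is generically invertible, the compatibility condition is the matrix identity
\[
	M(z,q^{-1}t)\,B(z,t)=B(q^{-1}z,t)\,M(z,t),
\]
where $M(z,q^{-1}t)$ is obtained from $M$ by the backward $t$-shift, so that its entries involve $\underline{x_i}$ and $\underline{y_i}$, while $B(q^{-1}z,t)$ carries the backward $z$-shift only in its corner entries.

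First I would observe that both $M$ and $B$ are banded matrices (the diagonal together with two nonzero superdiagonals) perturbed by three corner entries proportional to $z$ in the last two rows. Consequently each side of the identity is a matrix polynomial in $z$ of degree at most two, and I would compare the coefficients of $z^0$, $z^1$ and $z^2$ entrywise. The $z^2$-coefficients and the overwhelming majority of the $z^0$-coefficient entries either vanish or reduce to trivial identities by the banded structure; the substantive relations come only from a narrow diagonal strip together with the two boundary rows that carry the $z$-terms $-tz$, $\varphi_0 z$, $-z$ and $v_0 z$.

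Next I would treat the bulk rows and the boundary rows separately. For the bulk, inserting the definitions of $\varphi_j,u_j,v_j$ and simplifying, the entry comparison along the relevant strip should yield exactly the two families
\[
	x_{i-1}-x_i=\frac{b_{i-1}\underline{x_{i-1}}}{1+\underline{x_{i-1}}y_{i-1}}-\frac{a_i\underline{x_i}}{1+\underline{x_i}y_{i-1}},\qquad \underline{y_{i-1}}-\underline{y_i}=\frac{a_iy_{i-1}}{1+\underline{x_i}y_{i-1}}-\frac{b_iy_i}{1+\underline{x_i}y_i}
\]
for the interior indices $i$. The quasi-periodicity built into the data, $b_0=qb_{n+1}$, $x_0=tx_{n+1}$ and $y_0=(q/t)y_{n+1}$, is precisely what is needed so that the boundary rows reproduce these same two relations for the remaining index, and the comparison of the $z^1$-coefficient in the corner closes the system up to the scalar constraint \eqref{eq:qFST_constraint}. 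I would verify that the factor $q^{-n/2}$ emerges by tracking the accumulated $t$- and $q$-weights as the computation is carried through the corner entries.

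The main obstacle is organizational rather than conceptual: the identity is an equality of $(2n+2)\times(2n+2)$ matrices, so a naive entrywise check is unwieldy for general $n$. I plan to handle this by exploiting the cyclic index structure, verifying a single representative pair of bulk rows and then invoking the boundary conditions to transport the computation around the ``seam'' where $x_0,y_0,b_0$ are identified, so that the general-$n$ statement follows from a fixed finite computation. The delicate point will be the bookkeeping of the $t$- and $z$-shifts at the corner, where the constraint \eqref{eq:qFST_constraint} and the power $q^{-n/2}$ are produced; getting these weights right is exactly the place where the similarity-reduction origin of the system makes the result transparent, and I would use that construction as a consistency check on the direct verification.
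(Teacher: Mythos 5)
Your core strategy --- verifying the matrix identity $M(z,q^{-1}t)B(z,t)=B(q^{-1}z,t)M(z,t)$ entrywise in powers of $z$ --- is viable as far as the evolution equations are concerned, and it is genuinely different from what the paper does: the paper gives no computational proof of this theorem at all, but quotes it from \cite{Su4} and justifies it by the fact that the linear problem and the nonlinear system come out of one and the same similarity reduction of the $q$-Drinfeld--Sokolov hierarchy of type $A^{(1)}_{2n+1}$, so that compatibility holds by construction. Two small slips in your setup: you have the two routes to $\Psi(q^{-1}z,q^{-1}t)$ swapped (shifting first in $z$ and then in $t$ gives $B(q^{-1}z,t)M(z,t)$, not $M(z,q^{-1}t)B(z,t)$), and the products actually have degree at most one in $z$, since the $z$-entries of $M$ sit in columns $1,2$ while those of $B$ sit in rows $2n+1,2n+2$; neither affects the plan. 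Carrying the comparison out, the first superdiagonal entries $(j,j+1)$ give exactly the two families --- odd $j=2i-1$ is equivalent to the $y$-equation for index $i$, even $j=2i$ to the $x$-equation for index $i+1$ --- the corner entry $(2n+2,1)$ gives the $x$-equation for $i=1$ via the quasi-periodicity $b_0=qb_{n+1}$, $x_0=tx_{n+1}$, $y_0=(q/t)y_{n+1}$, and all remaining nonzero entries (second superdiagonal, the other corners) are identities or redundant consequences of these. So that part of your plan does work.

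The genuine gap is your treatment of constraint \eqref{eq:qFST_constraint}. You assert that ``the comparison of the $z^1$-coefficient in the corner closes the system up to the scalar constraint'' and that the factor $q^{-n/2}$ emerges from the accumulated weights there. It cannot. The constraint says $\prod_{j=1}^{2n+2}u_j=q^{-n/2}$, a product over \emph{all} diagonal entries of $B$; but every entry of either product $M(z,q^{-1}t)B(z,t)$ or $B(q^{-1}z,t)M(z,t)$ is a sum of at most three terms, each a product of only two matrix entries, so no entry --- and no coefficient of any power of $z$ --- can contain such a product of $2n+2$ factors. Determinants do not rescue this: in $\det M(z,q^{-1}t)\det B(z,t)=\det B(q^{-1}z,t)\det M(z,t)$ the constant term $\prod_j u_j$ of $\det B$ appears on both sides and cancels. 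In fact the compatibility condition is equivalent to precisely the $2n+2$ evolution equations, which for generic $(x_i,y_i)$ determine $(\underline{x_i},\underline{y_i})$ with no condition imposed on the data; hence the constraint, and in particular its normalization $q^{-n/2}$, is additional information that the matrix identity can never produce. It is an invariant condition whose value is fixed by the similarity reduction in \cite{Su4} --- exactly the point where the construction-based argument does work that a direct verification cannot. To complete your proof you would have to handle it separately: for instance, show that $\prod_{i=1}^{n+1}a_i(1+\underline{x_i}y_i)/(1+\underline{x_i}y_{i-1})$ is a first integral of the equations you derived, and then impose the value $q^{-n/2}$ as part of the definition of $q$-$P_{(n+1,n+1)}$ rather than claim it as an output of the compatibility computation.
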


The system $q$-$P_{(n+1,n+1)}$ admits the affine Weyl group symmetry of type $A^{(1)}_{2n+1}$.

\begin{thm}
Let $r_0,\ldots,r_{2n+1}$ be birational transformations defined by
\begin{equation}\begin{split}\label{eq:Aff_Wey_even}
	&r_{2j-2}(a_j) = b_{j-1},\quad
	r_{2j-2}(b_{j-1}) = a_j,\quad
	r_{2j-2}(x_{j-1}) = x_{j-1},\quad
	r_{2j-2}(y_{j-1}) = y_{j-1} - \frac{b_{j-1}-a_j}{x_{j-1}-x_j}, \\
	&r_{2j-2}(a_i) = a_i,\quad
	r_{2j-2}(b_{i-1}) = b_{i-1},\quad
	r_{2j-2}(x_{i-1}) = x_{i-1},\quad
	r_{2j-2}(y_{i-1}) = y_{i-1}\quad (i\neq j),
\end{split}\end{equation}
for $j=1,\ldots,n+1$, and
\begin{equation}\begin{split}\label{eq:Aff_Wey_odd}
	&r_{2j-1}(a_j) = b_j\quad
	r_{2j-1}(b_j) = a_j,\quad
	r_{2j-1}(x_j) = x_j - \frac{a_j-b_j}{y_{j-1}-y_j},\quad
	r_{2j-1}(y_j) = y_j, \\
	&r_{2j-1}(a_i) = a_i,\quad
	r_{2j-1}(b_i) = b_i,\quad
	r_{2j-1}(x_i) = x_i,\quad
	r_{2j-1}(y_i) = y_i\quad (i\neq j),
\end{split}\end{equation}
for $j=1,\ldots,n+1$.
Also let $\pi$ be a birational transformation defined by
\begin{equation}\begin{split}\label{eq:Aff_Wey_rot}
	&\pi(a_i) = q^{-\rho_1}b_i,\quad
	\pi(b_i) = q^{-\rho_1}a_{i+1}\quad (i=1,\ldots,n), \\
	&\pi(a_{n+1}) = q^{-\rho_1}b_{n+1},\quad
	\pi(b_{n+1}) = q^{-\rho_1-1}a_1,\quad
	\pi(\rho_1) = -\rho_1-\frac{1}{n+1}, \\
	&\pi(x_i) = q^{-2\rho_1}t^{\rho_1}\underline{y_i},\quad
	\pi(y_i) = q^{\rho_1}t^{-\rho_1}\underline{x_{i+1}}\quad (i=1,\ldots,n), \\
	&\pi(x_{n+1}) = q^{-2\rho_1}t^{\rho_1}\underline{y_{n+1}},\quad
	\pi(y_{n+1}) = q^{\rho_1+1}t^{-\rho_1-1}\underline{x_1},\quad
	\pi(t) = \frac{q^2}{t},
\end{split}\end{equation}
where
\[
	q^{\rho_1} = (q^na_1b_1\ldots a_{n+1}b_{n+1})^{1/(n+1)}.
\]
Then the system $q$-$P_{(n+1,n+1)}$ is invariant under actions of the transformations $r_0,\ldots,r_{2n+1}$ and $\pi$.
Furthermore the group of symmetries $\langle r_0,\ldots,r_{2n+1},\pi\rangle$ is isomorphic to the extended affine Weyl group of type $A_{2n+1}^{(1)}$.
Namely those transformations satisfy the fundamental relations
\[\begin{split}
	&r_i^2=1,\quad
	(r_ir_j)^{2-a_{i,j}}=1\quad (i,j=0,\ldots,2n+1; i\neq j), \\
	&\pi^{2n+2} = 1,\quad
	\pi r_i = r_{i+1}\pi,\quad
	\pi r_{2n+1} = r_0\pi\quad (i=0,\ldots,2n),
\end{split}\]
where
\[\begin{array}{llll}
	a_{i,i}=2 & (i=0,\ldots,2n+1), \\[4pt]
	a_{i,i+1}=a_{2n+1,0}=a_{i+1,i}=a_{0,2n+1}=-1 & (i=0,\ldots,2n), \\[4pt]
	a_{i,j}=0 & (\text{otherwise}).
\end{array}\]
\end{thm}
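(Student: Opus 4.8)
The plan is to prove the two assertions in turn: the invariance of $q$-$P_{(n+1,n+1)}$ under each generator, and the fundamental relations identifying the group $\langle r_0,\ldots,r_{2n+1},\pi\rangle$ with the extended affine Weyl group of type $A^{(1)}_{2n+1}$. For the invariance I would argue through the Lax pair \eqref{eq:Lax_qFST} rather than through the nonlinear equations directly. Since $q$-$P_{(n+1,n+1)}$ is the compatibility condition of $\Psi(q^{-1}z,t)=M(z,t)\Psi(z,t)$ and $\Psi(z,q^{-1}t)=B(z,t)\Psi(z,t)$, it suffices to show that each generator $w$ lifts to a symmetry of this linear system: there should be a gauge factor $G_w$ so that replacing $\Psi$ by $G_w\Psi$, together with the substitutions that $w$ induces on $z$ and $t$, carries $M$ and $B$ into matrices of the same banded shape whose entries are the $\varphi_i,u_i,v_i$ evaluated at the $w$-transformed coordinates and parameters. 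Invariance of the compatibility condition then yields invariance of the nonlinear system at no extra cost.

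For the reflections $r_0,\ldots,r_{2n+1}$ this is the transparent case. Each $r_i$ fixes $t$ and therefore commutes with both $q$-shift operators, so $r_i(\underline{x_j})=\underline{r_i(x_j)}$ and the action is supported on a short block of adjacent variables. I expect each $r_i$ to correspond to an elementary gauge transformation acting only in the two rows and columns labelled by $i$, exchanging the parameters $a_j\leftrightarrow b_{j-1}$ (for even index) or $a_j\leftrightarrow b_j$ (for odd index) and shearing the neighbouring $\varphi$-entry, so that verifying the covariance of $M$ and $B$ reduces to a $2\times 2$ or $3\times 3$ identity. Equivalently, and perhaps more economically for the write-up, one substitutes the new value of $y_{j-1}$ (resp. $x_j$) directly into the two displayed equations and the constraint \eqref{eq:qFST_constraint} and checks the resulting rational identity line by line.

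The genuine obstacle is the rotation $\pi$. Because $\pi(t)=q^2/t$ couples inversion with a shift, $\pi$ does not commute with the shift operators; rather it intertwines them, $\pi(\underline{f})=\overline{\pi(f)}$, and this relation must be installed before any substitution involving $\underline{x_i},\underline{y_i}$ is meaningful. Moreover $\pi$ matches the dynamical variables with their backward shifts under the cyclic rotation $i\mapsto i+1$, with the twist $b_{n+1}\mapsto q^{-\rho_1-1}a_1$ and $y_{n+1}\mapsto q^{\rho_1+1}t^{-\rho_1-1}\underline{x_1}$ at the seam. Accordingly the lift of $\pi$ to \eqref{eq:Lax_qFST} should be a cyclic permutation of the $2n+2$ rows and columns of $M$ and $B$, combined with the substitution sending $z,t$ to the rotated spectral and time variables and a scalar gauge depending on $q^{\rho_1}$ and $t^{\rho_1}$. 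The main effort is to pin down this scalar gauge and to verify that, after the rotation, the constraint \eqref{eq:qFST_constraint} and both families of equations close up correctly across the seam from $i=n+1$ to $i=1$, where the boundary data $b_0=qb_{n+1}$, $x_0=tx_{n+1}$ and $y_0=(q/t)y_{n+1}$ enter.

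Finally, the fundamental relations are established by direct composition of the explicit birational maps on the field generators $a_i,b_i,x_i,y_i,t,\rho_1$. Involutivity $r_i^2=1$, the commutations $(r_ir_j)^2=1$ when $a_{i,j}=0$, and the braid relations $(r_ir_j)^3=1$ when $a_{i,j}=-1$ each move only the few variables touched by the two reflections, so they collapse to short rational identities. The intertwiners $\pi r_i=r_{i+1}\pi$ and $\pi r_{2n+1}=r_0\pi$ reflect that $\pi$ rotates the reflection index by one, and follow once the shift-intertwining $\pi(\underline{f})=\overline{\pi(f)}$ is used to compare the two sides on shifted variables. The order relation $\pi^{2n+2}=1$ is obtained by iterating $\pi$: since $\pi$ inverts $t$, the consecutive backward shifts it introduces telescope through $\pi(\underline{f})=\overline{\pi(f)}$, and after $2n+2$ steps the index rotation, the $\rho_1$-shifts and the $t$-inversions all return to the identity.
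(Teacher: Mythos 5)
First, a point of comparison you could not have known: the paper contains no proof of this theorem. It sits in the review section (Section \ref{sec:Review_qFST}) and is stated without argument; the reflections $r_0,\ldots,r_{2n+1}$ are quoted from \cite{Su4}, while $\pi$ is a \emph{corrected} form of the rotation of \cite{Su4} (see the Remark after the $q$-$P_{\rm{VI}}$ reduction theorem: without the $q^{\rho_1}$, $t^{\rho_1}$ factors, constraint \eqref{eq:qFST_constraint} is not preserved). So your proposal must be judged on its own merits. On the whole your plan is viable, and you isolate the two genuinely delicate points: the intertwining $\pi(\underline{f})=\overline{\pi(f)}$ forced by $\pi(t)=q^2/t$, and the seam data $b_0=qb_{n+1}$, $x_0=tx_{n+1}$, $y_0=(q/t)y_{n+1}$. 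In fact the ``more economical'' alternative you mention only for the reflections --- direct substitution --- is also the efficient route for $\pi$, and it is cleaner than your write-up suggests: all $\rho_1$-powers cancel in the relevant products (e.g.\ $\pi(\underline{x_i})\,\pi(y_{i-1})=\underline{x_i}\,y_i$ and $\pi(\underline{x_i})\,\pi(y_i)=\underline{x_{i+1}}\,y_i$), and one finds that $\pi$ carries the first family of equations (index $i$) exactly onto the second family (index $i$), and the second family (index $i$) onto the first family (index $i+1$); the left-hand side of \eqref{eq:qFST_constraint} is carried to $(q^n\prod_{i}a_i)^{-1}$ times the reciprocal of itself, so its invariance follows from the constraint itself. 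Likewise $\pi^2(x_i)=t^{1/(n+1)}x_{i+1}$ with no residual $q$-shift, so after $n+1$ iterations the accumulated factor $t$ cancels the seam twist $x_0=tx_{n+1}$, giving $\pi^{2n+2}=1$; this makes precise your ``telescoping'' remark.

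The one step that would fail as written is your Lax-pair lift of $\pi$. A constant cyclic permutation of the rows and columns of $M$, combined with a scalar gauge, rotates the entry $\varphi_{2i-2}=x_{i-1}-x_i$ into the slot of $\varphi_{2i-1}$ and therefore produces the \emph{unshifted} difference $y_{i-1}-y_i$, whereas the transformation in the theorem requires $\underline{y_{i-1}}-\underline{y_i}$; it also moves the $z$-bearing corner entries $-tz$, $\varphi_0 z$, $-z$ out of their corners, destroying the banded shape. A genuine lift must combine a $z$-dependent shift matrix with the $t$-shift linear equation itself (i.e.\ it must involve $B$, equivalently the substitution $t\mapsto q^2/t$, which reverses the direction of the $q$-shift in the new time variable), and pinning this down is substantially more work than for the reflections. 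Since you hedge by offering the substitution route as well, the proof is repairable, but as the centerpiece of your treatment of $\pi$ this step needs to be replaced or corrected.
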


The system $q$-$P_{(2,2)}$ can be reduced to $q$-$P_{\rm{VI}}$.

\begin{thm}[\cite{Su4}]
If, in the system $q$-$P_{(2,2)}$, we set
\begin{equation}\label{eq:def_fg_qP6}
	f = \frac{t(x_2-x_1)\xi_1}{\xi_2},\quad
	g = \frac{\underline{x_2}(qt+\underline{x_1}y_2)\psi_1}{(1+\underline{x_2}y_2)\psi_2},
\end{equation}
where
\[\begin{split}
	\xi_1 &= (x_1-x_2)(y_0-y_1) - (a_1-b_1), \\
	\xi_2 &= (tx_2-x_1)(x_1-x_2)(y_0-y_1) + (a_1-b_1)x_1 + \{(b_1-a_2)t-(a_1-a_2)\}x_2, \\
	\psi_1 &= q^{1/2}(q^{1/2}-a_1b_1t)\underline{x_2}y_2 + (1-q^{1/2}a_1b_1)t, \\
	\psi_2 &= q^{1/2}a_2(q^{1/2}-a_1b_1t)\underline{x_1}\underline{x_2}y_2 + a_1(1-q^{1/2}b_1a_2)t\underline{x_1} - (a_1-a_2)t\underline{x_2},
\end{split}\]
then they satisfy the $q$-Painlev\'{e} VI equation
\[
	\frac{f\overline{f}}{\alpha_3\alpha_4} = \frac{(\overline{g}-t\beta_1)(\overline{g}-t\beta_2)}{(\overline{g}-\beta_3)(\overline{g}-\beta_4)},\quad
	\frac{g\overline{g}}{\beta_3\beta_4} = \frac{(f-t\alpha_1)(f-t\alpha_2)}{(f-\alpha_3)(f-\alpha_4)},
\]
with parameters
\[\begin{split}
	&\alpha_1 = 1,\quad
	\alpha_2 = q^{1/2}a_1b_1,\quad
	\alpha_3 = 1,\quad
	\alpha_4 = \frac{1}{q^{1/2}a_2b_2}, \\
	&\beta_1 = q^{1/2}b_1,\quad
	\beta_2 = q^{1/2}a_1,\quad
	\beta_3 = \frac{1}{qa_2},\quad
	\beta_4 = \frac{1}{b_2}.
\end{split}\]
\end{thm}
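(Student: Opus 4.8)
The plan is to establish the reduction by direct verification: I regard the two displayed relations of $q$-$P_{\rm{VI}}$ as rational identities in the variables of $q$-$P_{(2,2)}$ and show that they are forced by the evolution equations specialized to $n=1$. I would first write out that specialization explicitly. Letting $i=1,2$ in the general formulas produces four $q$-difference equations; the constraint \eqref{eq:qFST_constraint} becomes $a_1a_2(1+\underline{x_1}y_1)(1+\underline{x_2}y_2)=q^{-1/2}(1+\underline{x_1}y_0)(1+\underline{x_2}y_1)$, and the boundary data $b_0=qb_2$, $x_0=tx_2$, $y_0=(q/t)y_2$ eliminate the zeroth-index quantities. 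The dynamical variables are thus $x_1,x_2,y_1,y_2$ (equivalently their backward shifts), the parameters $a_1,a_2,b_1,b_2$ are constants under $t\mapsto qt$, and the formulas \eqref{eq:def_fg_qP6} present the two scalar $q$-$P_{\rm{VI}}$ coordinates $f,g$ as explicit rational functions of these.

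Next I would produce the forward shifts. Applying the operation $t\mapsto qt$ to the four evolution equations turns the backward shifts $\underline{x_i},\underline{y_i}$ into $x_i,y_i$ and the current $x_i,y_i$ into $\overline{x_i},\overline{y_i}$, so that the shifted equations determine $\overline{x_i}$ and $\overline{y_i}$ implicitly in terms of the current variables; here one must apply the same shift to the $t$-dependent boundary relations, e.g.\ $\overline{x_0}=qt\,\overline{x_2}$ and $\overline{y_0}=t^{-1}\overline{y_2}$. From these I obtain $\overline{f}$ and $\overline{g}$ by substituting into \eqref{eq:def_fg_qP6}. I would then check the second relation $g\overline{g}/(\beta_3\beta_4)=(f-t\alpha_1)(f-t\alpha_2)/\{(f-\alpha_3)(f-\alpha_4)\}$ first, since it couples only $f$, $g$ and $\overline{g}$ and hence requires only the single forward shift $\overline{y_2}$; clearing denominators reduces it to a polynomial identity that should collapse upon repeated use of the evolution equations and the constraint. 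The first relation, coupling $f$, $\overline{f}$ and $\overline{g}$, is treated the same way after also computing $\overline{f}$.

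The parameter dictionary need not be guessed: with $f,g$ fixed by \eqref{eq:def_fg_qP6}, the constants $\alpha_i,\beta_i$ are pinned down by requiring the identities to hold, and one checks that the resulting values are exactly those listed. Their natural origin is the Lax pair: the eigenvalues of $M(0,t)$ at $n=1$ are $a_1,b_1,a_2,b_2$, and the products $a_1b_1$ and $a_2b_2$ entering $\alpha_2,\alpha_4$ together with the single factors in $\beta_1,\dots,\beta_4$ are precisely the combinations that persist in the $2\times2$ connection data of the Jimbo-Sakai form, the powers of $q^{1/2}$ coming from the $z$- and $t$-normalizations in \eqref{eq:Lax_qFST}. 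Indeed, a conceptually cleaner (though not obviously shorter) route would be to gauge the $4\times4$ Lax pair at $n=1$ down to the Jimbo-Sakai $2\times2$ pair and read off $f,g$ and the parameters directly.

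The main obstacle is the elimination. The expressions $\xi_1,\xi_2,\psi_1,\psi_2$ make $f$ and $g$ high-degree rational functions of several variables, the evolution equations are implicit rather than solved, and the constraint \eqref{eq:qFST_constraint} must be invoked repeatedly both to close the system and to cancel the spurious common factors that appear after clearing denominators. The delicate bookkeeping is to carry out the $t\mapsto qt$ shift consistently through the boundary relations while keeping $a_i,b_i$ fixed, so that the two relations emerge with precisely the stated $\alpha_i,\beta_i$; in practice this verification is best entrusted to symbolic computation.
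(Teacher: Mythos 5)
Your proposal is sound in outline but takes a genuinely different route from the paper. The paper itself contains no direct verification: the theorem is quoted from \cite{Su4}, and the text of Section \ref{sec:Review_qFST} states explicitly that there $q$-$P_{\rm{VI}}$ was derived by reducing the linear system \eqref{eq:Lax_qFST} (at $n=1$, a $4\times4$ pair) to the $2\times2$ Jimbo--Sakai pair of \cite{JS}; the coordinates $f,g$ of \eqref{eq:def_fg_qP6} and the parameters $\alpha_i,\beta_i$ are read off from that reduction rather than pinned down a posteriori. In other words, the route you mention only as a ``conceptually cleaner'' alternative is in fact the paper's proof. Your primary route --- specializing the evolution equations to $n=1$, shifting $t\mapsto qt$ (your bookkeeping $\overline{x_0}=qt\,\overline{x_2}$, $\overline{y_0}=t^{-1}\overline{y_2}$ with $a_i,b_i$ fixed is correct), and checking the two $q$-$P_{\rm{VI}}$ relations modulo the evolution equations and constraint \eqref{eq:qFST_constraint} --- is legitimate in principle, since the statement is a purely algebraic one about a birational correspondence; but note that because the shifted equations determine $\overline{y_i},\overline{x_i}$ only implicitly, ``clearing denominators to a polynomial identity'' is really an ideal-membership (elimination) check over the variety cut out by the system, which is exactly the massive computation you defer to a machine. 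What the Lax reduction buys is the avoidance of that elimination and an explanation of where the otherwise unmotivated $\xi_i,\psi_i$ and the parameter dictionary come from; what your route buys is independence from the Lax formalism. You might also note that the present paper supplies a third, intermediate path: by the remark opening Section \ref{sec:Main_result}, $r_1r_2(f)$ and $r_1r_2(g)$ coincide (up to elementary factors) with the variables \eqref{eq:def_fg} for $n=1$, so the statement follows from Theorem \ref{thm:qFST} at $n=1$ combined with the $W(A^{(1)}_{3})$ symmetry of Section \ref{sec:Affine_Weyl}, with far less brute-force elimination than your direct attack.
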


\begin{rem}
In \cite{Su4} we defined the transformation $\pi$ by
\[\begin{split}
	&\pi(a_i) = b_i,\quad
	\pi(b_i) = a_{i+1}\quad (i=1,\ldots,n),\quad
	\pi(a_{n+1}) = b_{n+1},\quad
	\pi(b_{n+1}) = \frac{a_1}{q}, \\
	&\pi(x_i) = \underline{y_i},\quad
	\pi(y_i) = \underline{x_{i+1}}\quad (i=1,\ldots,n),\quad
	\pi(x_{n+1}) = \underline{y_{n+1}},\quad
	\pi(y_{n+1}) = \frac{q}{t}\underline{x_1},\quad
	\pi(t) = \frac{q^2}{t}.
\end{split}\]
As a matter of fact, unless we replace constraint \eqref{eq:qFST_constraint} with
\[
	\prod_{i=1}^{n+1}\frac{a_i^{1/2}}{b_i^{1/2}}\frac{1+\underline{x_i}y_i}{1+\underline{x_i}y_{i-1}} = q^{1/4},
\]
the system $q$-$P_{(n+1,n+1)}$ is not invariant under an action of $\pi$.
If we do so, then the system $q$-$P_{(2,2)}$ seems to reduce not to $q$-$P_{\rm{VI}}$ but to a $q$-analogue of the Painlev\'e V equation.
This $q$-difference equation was derived in \cite{M1} from a binational representation of the extended affine Weyl group of type $A^{(1)}_1\times A^{(1)}_3$ given in \cite{KNY}.
Afterward that equation was found to be a subsystem of $q$-$P_{\rm{VI}}$ in \cite{Tak}.
\end{rem}

The system $q$-$P_{(n+1,n+1)}$ admits a particular solution in terms of the $q$-hypergeometric function ${}_{n+1}\phi_n$ defined by the formal power series
\[
	{}_{n+1}\phi_n\left[\begin{array}{c}\alpha_1,\ldots,\alpha_n,\alpha_{n+1}\\\beta_1,\ldots,\beta_n\end{array};q,t\right]
	= \sum_{k=0}^{\infty}\frac{(\alpha_1;q)_k\ldots(\alpha_n;q)_k(\alpha_{n+1};q)_k}{(\beta_1;q)_k\ldots(\beta_n;q)_k(q;q)_k}t^k,
\]
where $(\alpha;q)_k$ stands for the $q$-shifted factorial
\[
	(\alpha;q)_0 = 1,\quad
	(\alpha;q)_k = (1-\alpha)(1-q\alpha)\ldots(1-q^{k-1}\alpha)\quad (k\geq1).
\]

\begin{thm}[\cite{Su4}]
If, in the system $q$-$P_{(n+1,n+1)}$, we assume that
\[
	y_i = 0\quad (i=1,\ldots,n+1),\quad \prod_{i=1}^{n+1}a_i = q^{-n/2},
\]
then a vector of the variables $\mathbf{x}={}^t[x_1,\ldots,x_{n+1}]$ satisfies a system of linear $q$-difference equations
\begin{equation}\label{eq:qHGE}
	\overline{\mathbf{x}} = \left(A_0+\frac{A_1}{1-qt}\right)\mathbf{x},
\end{equation}
with $(n+1)\times(n+1)$ matrices
\[\begin{split}
	A_0 &= \begin{bmatrix}b_1&b_2-a_2&b_3-a_3&\ldots&b_n-a_n&b_{n+1}-a_{n+1}\\&b_2&b_3-a_3&\ldots&b_n-a_n&b_{n+1}-a_{n+1}\\&&&\ddots&\vdots&\vdots\\&&&&b_n-a_n&b_{n+1}-a_{n+1}\\&O&&&b_n&b_{n+1}-a_{n+1}\\&&&&&b_{n+1}\end{bmatrix}, \\
	A_1 &= \begin{bmatrix}1\\1\\\vdots\\1\\1\\1\end{bmatrix}\begin{bmatrix}a_1-b_1&a_2-b_2&a_3-b_3&\ldots&a_n-b_n&a_{n+1}-b_{n+1}\end{bmatrix}.
\end{split}\]
Furthermore system \eqref{eq:qHGE} admits a solution
\[\begin{split}
	\mathbf{x} &= t^{-\log_qa_1}\begin{bmatrix}\phi_1\\\vdots\\\phi_{n+1}\end{bmatrix}, \\
	\phi_j &= \prod_{i=1}^{j-1}\frac{b_i-a_1}{a_{i+1}-a_1}{}_{n+1}\phi_n\left[\begin{array}{c}q\frac{a_1}{b_1},\ldots,q\frac{a_1}{b_{j-1}},\frac{a_1}{b_j},\ldots,\frac{a_1}{b_n},\frac{a_1}{b_{n+1}}\\q\frac{a_1}{a_2},\ldots,q\frac{a_1}{a_j},\frac{a_1}{a_{j+1}},\ldots,\frac{a_1}{a_n}\end{array};q,q^{(n+2)/2}b_1\ldots b_{n+1}t\right].
\end{split}\]
\end{thm}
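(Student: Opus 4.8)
The plan is to establish the two assertions separately: that the specialization $y_i=0$ together with $\prod_{i=1}^{n+1}a_i=q^{-n/2}$ collapses $q$-$P_{(n+1,n+1)}$ onto the linear system \eqref{eq:qHGE}, and then that the displayed vector of ${}_{n+1}\phi_n$-series solves that system. For the first assertion I would substitute $y_i=0$ (for $i=0,\ldots,n+1$) into both families of equations. The second family becomes $0=0$ identically and constraint \eqref{eq:qFST_constraint} collapses to $\prod_{i=1}^{n+1}a_i=q^{-n/2}$, so the specialization is consistent with the hypothesis; the first family reduces to the three-term relations $x_{i-1}-x_i=b_{i-1}\underline{x_{i-1}}-a_i\underline{x_i}$. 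Applying $\overline{(\,\cdot\,)}$ and using $\overline{\underline{x}}=x$, $x_0=tx_{n+1}$ and $b_0=qb_{n+1}$ turns these into $\overline{x_{i-1}}-\overline{x_i}=b_{i-1}x_{i-1}-a_ix_i$ for $i=2,\ldots,n+1$, together with the boundary relation $qt\,\overline{x_{n+1}}-\overline{x_1}=qtb_{n+1}x_{n+1}-a_1x_1$. The $n$ bulk relations telescope to express each $\overline{x_k}$ through $\overline{x_1}$ and an upper-triangular combination of the $x_j$; substituting the resulting expression for $\overline{x_{n+1}}$ into the boundary relation and solving for $\overline{x_1}$ introduces the factor $(1-qt)^{-1}$. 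Collecting terms reproduces the triangular telescoping part $A_0$ and the rank-one boundary correction $A_1/(1-qt)$, and matching entries is then a direct computation.

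For the second assertion I would substitute $\mathbf{x}=t^{-\log_qa_1}\,{}^t[\phi_1,\ldots,\phi_{n+1}]$ into the $n+1$ scalar relations above rather than into the inverted matrix form \eqref{eq:qHGE}, which is cleaner. A one-line computation of $(qt)^{-\log_qa_1}/t^{-\log_qa_1}$ shows that the prefactor contributes precisely the scalar factor needed to absorb the $a_1$-dependence of the right-hand sides and otherwise cancels. Writing $\phi_j=c_jF_j$ with $c_j=\prod_{i=1}^{j-1}\frac{b_i-a_1}{a_{i+1}-a_1}$ and $F_j$ the ${}_{n+1}\phi_n$-series of argument $z=q^{(n+2)/2}b_1\cdots b_{n+1}t$, the forward shift acts simply as $\overline{F_j(z)}=F_j(qz)$. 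Consecutive functions $F_{i-1}$ and $F_i$ differ only by the single parameter replacements (upper) $a_1/b_{i-1}\mapsto qa_1/b_{i-1}$ and (lower) $a_1/a_i\mapsto qa_1/a_i$, so each bulk relation becomes a contiguity relation among $F_{i-1}(z),F_{i-1}(qz),F_i(z),F_i(qz)$.

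I would verify each bulk contiguity relation by comparing coefficients of $z^k$. Using $\frac{(q\alpha;q)_k}{(\alpha;q)_k}=\frac{1-q^k\alpha}{1-\alpha}$ and $\frac{(\beta;q)_k}{(q\beta;q)_k}=\frac{1-\beta}{1-q^k\beta}$, the ratio of the $k$-th coefficients of $F_i$ and $F_{i-1}$ is $\frac{1-q^ka_1/b_{i-1}}{1-a_1/b_{i-1}}\cdot\frac{1-a_1/a_i}{1-q^ka_1/a_i}$, and the constant $c_i/c_{i-1}=\frac{b_{i-1}-a_1}{a_i-a_1}$ combines with the two $k$-independent factors to give the clean value $b_{i-1}/a_i$. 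After this simplification the coefficient identity reduces to an elementary rational identity in $q^k$ (both sides are proportional to $q^k\,\frac{a_i-b_{i-1}}{a_i-q^ka_1}$), which holds term by term and disposes of all $n$ bulk relations uniformly.

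The hard part will be the single boundary relation, and I expect it to be the main obstacle. Here the explicit coefficient $qt$ equals $q^{-n/2}(b_1\cdots b_{n+1})^{-1}z$, and it is exactly the constraint $\prod_{i=1}^{n+1}a_i=q^{-n/2}$ that rewrites this as $\tfrac{a_1\cdots a_{n+1}}{b_1\cdots b_{n+1}}\,z$; since multiplication by $z$ raises the summation index by one, the boundary relation does not match like powers of $z$ but produces a genuine recurrence linking the $k$-th coefficient of $F_1$ (with $c_1=1$, no shifted parameters) to the $(k-1)$-th coefficient of $F_{n+1}$ (the ``wrap-around'' function, with all but one upper and all lower parameters shifted). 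Checking that this recurrence is an identity requires careful bookkeeping of the $q$-shifted factorials across the cyclic boundary from index $n+1$ back to $1$, and the specific base $q^{(n+2)/2}b_1\cdots b_{n+1}$ of $z$ together with the constraint are precisely what make the index-shifted terms recombine correctly. A conceptual alternative that sidesteps the recurrence would be to use the rank-one structure of $A_1$ to decouple \eqref{eq:qHGE} into a single scalar $q$-difference equation of order $n+1$ for one component, and to identify it with the $q$-hypergeometric equation annihilating ${}_{n+1}\phi_n$, reading off its parameters directly from $A_0$ and $A_1$.
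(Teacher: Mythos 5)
The paper itself offers no proof of this theorem (it is quoted from \cite{Su4}), so your proposal has to be judged on its own. Its first half is correct: setting $y_i=0$ kills the second family of equations, turns constraint \eqref{eq:qFST_constraint} into $\prod_{i=1}^{n+1}a_i=q^{-n/2}$, and the shifted relations $\overline{x_{i-1}}-\overline{x_i}=b_{i-1}x_{i-1}-a_ix_i$ ($i=2,\ldots,n+1$) together with $qt\,\overline{x_{n+1}}-\overline{x_1}=qt\,b_{n+1}x_{n+1}-a_1x_1$ do reassemble, by exactly the telescoping you describe, into $\overline{\mathbf{x}}=(A_0+\frac{A_1}{1-qt})\mathbf{x}$; conversely these $n+1$ scalar relations are equivalent to the matrix equation for generic $t$, so verifying them suffices.

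The second half contains a genuine error. Your ``one-line computation'' of the prefactor gives $(qt)^{-\log_qa_1}/t^{-\log_qa_1}=q^{-\log_qa_1}=a_1^{-1}$, but your own coefficient analysis shows each bulk relation needs the factor $a_1$: writing $C^{(k)}_j$ for the $k$-th coefficient of $c_jF_j$, the right-hand side equals $q^ka_1\,\frac{a_i-b_{i-1}}{a_i-q^ka_1}\,C^{(k)}_{i-1}$, while the left-hand side equals $q^k$ times the prefactor ratio times the same quantity. So with the prefactor as printed the relations fail by $a_1^{2}$; concretely, for $n=1$, $k=0$ the two sides are $a_1^{-1}(a_2-b_1)/(a_2-a_1)$ versus $a_1(a_2-b_1)/(a_2-a_1)$. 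What this mismatch actually exposes is that the statement (inherited from \cite{Su4}) contains typos: the prefactor must be $t^{+\log_qa_1}$, and the lower-parameter list must end at $a_1/a_{n+1}$ rather than $a_1/a_n$ (otherwise ${}_{n+1}\phi_n$ has only $n-1$ lower parameters). A blind verification ought to detect and flag this rather than assert that the factor cancels. Separately, you leave the wrap-around boundary relation unproved and predict it is the main obstacle; in fact it yields to the same telescoping you already use for the bulk relations: iterating the consecutive-index ratio gives $C^{(k-1)}_{n+1}=\prod_{i=1}^{n}\frac{b_i-q^{k-1}a_1}{a_{i+1}-q^{k-1}a_1}\,C^{(k-1)}_1$, the ratio $C^{(k)}_1/C^{(k-1)}_1$ is read off directly from the series $F_1$, and after substituting $qt=\frac{a_1\cdots a_{n+1}}{b_1\cdots b_{n+1}}z$ (your own observation) every factor $(b_i-q^{k-1}a_1)$ and $(a_i-q^{k-1}a_1)$ cancels and the constraint supplies exactly the remaining constant. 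So the step you defer is a two-line computation, not a bookkeeping problem, and with the corrected prefactor your plan goes through in full.
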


Therefore we want to regard the system $q$-$P_{(n+1,n+1)}$ as a generalization of $q$-$P_{\rm{VI}}$ from a viewpoint of a particular solution in terms of the $q$-hypergeometric function.
However, as is seen in Section \ref{sec:Intro}, we have two problems.
In \cite{Su4} we derived $q$-$P_{\rm{VI}}$ by reducing linear system \eqref{eq:Lax_qFST} to the one with $2\times2$ matrices given in \cite{JS} from $q$-$P_{(2,2)}$.
We could not use a similar method in a general case, although we reduced linear system \eqref{eq:Lax_qFST} to the one with $(n+1)\times(n+1)$ matrices in \cite{FS2}.
Definition of dependent variables \eqref{eq:def_fg_qP6} is complicated and hence unsuitable for a generalization.
In this article we choose a suitable set of $2n$ dependent variables $(f_i,g_i)$ and reduce the system $q$-$P_{(n+1,n+1)}$ to a one of $2n$-th order.
In the obtained system the forward $q$-shifts $(\overline{f}_i,\overline{g}_i)$ is expressed as functions in $(f_i,g_i)$.

\section{Main result}\label{sec:Main_result}

The key to solving is the affine Weyl group symmetry of $q$-$P_{(n+1,n+1)}$.
We can simplify definition of dependent variables \eqref{eq:def_fg_qP6} as
\[
	r_1r_2(f)	= -t\frac{x_1-x_2}{tx_2-x_1},\quad
	r_1r_2(g)	= \frac{a_2t}{q^{1/2}}\frac{\underline{x_2}(1+\underline{x_1}y_1)}{\underline{x_1}(1+\underline{x_2}y_1)}.
\]
This fact suggests a choice of dependent variables of a $2n$-th order system.

\begin{thm}\label{thm:qFST}
If, in the system $q$-$P_{(n+1,n+1)}$, we set
\begin{equation}\label{eq:def_fg}
	f_i = t\frac{x_i-x_{i+1}}{tx_{n+1}-x_1},\quad
	g_i = a_{i+1}\frac{\underline{x_{i+1}}(1+\underline{x_i}y_i)}{\underline{x_i}(1+\underline{x_{i+1}}y_i)}\quad (i=1,\ldots,n),
\end{equation}
then they satisfy a system of $q$-difference equations
\begin{align}
	&f_i\overline{f}_i = qt\frac{F_iF_{i+1}\overline{g}_0(b_i-\overline{g}_i)(\overline{g}_i-a_{i+1})}{F_{n+1}F_1\overline{g}_i(b_0-\overline{g}_0)(\overline{g}_0-a_1)} &(i=1,\ldots,n), \label{eq:qFST_f} \\
	&g_i\overline{g}_i = \frac{F_{i+1}G_i}{F_iG_{i+1}} &(i=1,\ldots,n), \label{eq:qFST_g}
\end{align}
where
\[
	b_0 = qb_{n+1},\quad
	g_0 = \frac{1}{q^{(n-2)/2}t}\prod_{i=1}^{n}\frac{1}{g_i} = \frac{qa_1}{t}\frac{\underline{x_1}(1+\underline{x_{n+1}}y_{n+1})}{\underline{x_{n+1}}(1+\underline{x_1}y_0)} = a_1\frac{\underline{x_1}(1+\underline{x_0}y_0)}{\underline{x_0}(1+\underline{x_1}y_0)},
\]
and
\[\begin{split}
	F_i &= \sum_{j=1}^{i-1}f_j + t\sum_{j=i}^{n}f_j + t, \\
	G_i &= \sum_{j=i}^{n}\prod_{k=i}^{j-1}b_ka_{k+1}\frac{\prod_{l=j+1}^{n}g_l}{\prod_{l=1}^{j-1}g_l}f_j + q^{n/2}t\prod_{k=i}^{n}b_ka_{k+1} \\
	&\quad+ q^nt\sum_{j=1}^{i-1}\frac{b_{n+1}a_1\prod_{k=1}^{n}b_ka_{k+1}}{\prod_{k=j}^{i-1}b_ka_{k+1}}\frac{\prod_{l=j+1}^{n}g_l}{\prod_{l=1}^{j-1}g_l}f_j\quad (i=1,\ldots,n+1).
\end{split}\]
Furthermore if we set $h=tx_{n+1}-x_1$, then it satisfies a linear $q$-difference equation
\begin{equation}\label{eq:qFST_h}
	\overline{h} = -\frac{F_{n+1}F_1(b_0-\overline{g}_0)(\overline{g}_0-a_1)}{t(t-1)^2\overline{g}_0}h.
\end{equation}

\end{thm}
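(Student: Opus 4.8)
The plan is to verify the three groups of identities by direct substitution, reducing both sides of \eqref{eq:qFST_f}, \eqref{eq:qFST_g} and \eqref{eq:qFST_h} to the original variables and then closing them with the defining relations of $q$-$P_{(n+1,n+1)}$. The decisive preliminary step is to evaluate the auxiliary sums $F_i$ in closed form. Summing the telescoping definition \eqref{eq:def_fg} of $f_j$ and using the periodicity $x_0=tx_{n+1}$, one finds the clean identity $F_i=t(t-1)x_i/h$ with $h=tx_{n+1}-x_1$, so that every ratio $F_{i+1}/F_i$ collapses to $x_{i+1}/x_i$. This is precisely what makes the right-hand sides of \eqref{eq:qFST_f}--\eqref{eq:qFST_h} tractable.

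Next I would record the forward shift of $q$-$P_{(n+1,n+1)}$ obtained by $t\mapsto qt$, which expresses $\overline{x}_i-\overline{x}_{i+1}$ through the current $x_i$ and the forward shift $\overline{y}_i$. Setting $P_i=a_{i+1}x_{i+1}/(1+x_{i+1}\overline{y}_i)$ and $Q_i=b_ix_i/(1+x_i\overline{y}_i)$, a short computation yields $\overline{g}_i=b_iP_i/Q_i$, and hence the factorizations $b_i-\overline{g}_i=b_i(\overline{x}_i-\overline{x}_{i+1})/Q_i$ and $\overline{g}_i-a_{i+1}=a_{i+1}b_i(x_{i+1}-x_i)/\{Q_i(1+x_i\overline{y}_i)(1+x_{i+1}\overline{y}_i)\}$. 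Since $x_i-x_{i+1}$ and $\overline{x}_i-\overline{x}_{i+1}$ are exactly the numerators defining $f_i$ and $\overline{f}_i$, the product $(b_i-\overline{g}_i)(\overline{g}_i-a_{i+1})$ automatically carries a factor $f_i\overline{f}_i$; substituting this together with $F_i=t(t-1)x_i/h$ into \eqref{eq:qFST_f} and \eqref{eq:qFST_h} should collapse both to identities. I have checked this mechanism for \eqref{eq:qFST_h}: the index-$0$ factorization of $(b_0-\overline{g}_0)(\overline{g}_0-a_1)/\overline{g}_0$ cancels against $F_{n+1}F_1$ and reproduces $\overline{h}$ exactly.

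The three asserted expressions for $g_0$ I would treat separately. The last is simply the definition \eqref{eq:def_fg} of $g_i$ read at the index $i=0$ together with $x_0=tx_{n+1}$ and $y_0=(q/t)y_{n+1}$, and the middle one then follows from the resulting equality $\underline{x_0}\,y_0=\underline{x_{n+1}}\,y_{n+1}$. The first equality, $g_0=q^{(2-n)/2}t^{-1}\prod_{i=1}^n g_i^{-1}$, is where the constraint \eqref{eq:qFST_constraint} enters: in $\prod_{i=0}^n g_i$ the factors $\underline{x_{i+1}}/\underline{x_i}$ telescope to $q/t$, while the remaining product of the $(1+\underline{x_i}y_i)/(1+\underline{x_i}y_{i-1})$ reduces, by \eqref{eq:qFST_constraint} and the equality $1+\underline{x_0}y_0=1+\underline{x_{n+1}}y_{n+1}$, to $q^{-n/2}/\prod_{i=1}^{n+1}a_i$; this gives $\prod_{i=0}^n g_i=q^{(2-n)/2}/t$, as required.

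The main obstacle will be the $g$-evolution \eqref{eq:qFST_g}, since it hinges on the auxiliary quantity $G_i$. The structure of $G_i$ mirrors that of $F_i$: it is a first-order inhomogeneous summation in the index, so I expect $G_i$ to be a common factor times an explicit expression in $\underline{x_i}$ and $y_i$ playing for the $g$'s the role that $x_i$ plays for the $f$'s. Concretely, I would read off from the explicit sums that $G_i$ and $G_{i+1}$ differ only by the transfer of a single $f_i$-term and a uniform rescaling of the coefficients by $b_ia_{i+1}$, and then show, using the unshifted system together with \eqref{eq:qFST_constraint}, that this recursion is equivalent to $G_i/G_{i+1}=(x_i/x_{i+1})\,g_i\overline{g}_i$; combined with $F_{i+1}/F_i=x_{i+1}/x_i$ this yields \eqref{eq:qFST_g}. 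The delicate points here are the bookkeeping of the powers of $q$ inherited from \eqref{eq:qFST_constraint}, and the simultaneous presence of backward shifts in $g_i$ and forward shifts in $\overline{g}_i$, which forces one to use both $q$-$P_{(n+1,n+1)}$ and its $t\mapsto qt$ image to eliminate $\overline{y}_i$ and close the identity.
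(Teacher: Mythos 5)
Your handling of everything except \eqref{eq:qFST_g} is correct and essentially coincides with the paper's own argument: the closed form $F_i=t(t-1)x_i/h$ is exactly the identity the paper uses, your factorizations of $b_i-\overline{g}_i$ and $\overline{g}_i-a_{i+1}$ through the $t\mapsto qt$ image of the system are the forward shift of the identities the paper derives from \eqref{eq:def_fg} and the first equation of $q$-$P_{(n+1,n+1)}$, and they do collapse \eqref{eq:qFST_f} and \eqref{eq:qFST_h} to identities; the telescoping-plus-constraint argument for the three expressions of $g_0$ is also fine.

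The genuine gap is in your plan for \eqref{eq:qFST_g}, which is the heart of the theorem and occupies almost all of the paper's proof. Your key assertion is that the recursion hidden in the sums,
\begin{equation*}
G_i-b_ia_{i+1}G_{i+1}=\Bigl(1-q^ntb_{n+1}a_1\prod_{k=1}^{n}b_ka_{k+1}\Bigr)\frac{\prod_{l=i+1}^{n}g_l}{\prod_{l=1}^{i-1}g_l}\,f_i ,
\end{equation*}
is ``equivalent to'' $G_i/G_{i+1}=(x_i/x_{i+1})\,g_i\overline{g}_i$. It cannot be: the displayed recursion is a purely formal property of the polynomials $G_i$, valid for arbitrary values of $(f_j,g_j)$ and containing no dynamical information, whereas the relation $G_i/G_{i+1}=(x_i/x_{i+1})g_i\overline{g}_i$ involves the time evolution and, given $F_{i+1}/F_i=x_{i+1}/x_i$, is literally a restatement of \eqref{eq:qFST_g}. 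As written, your plan assumes the conclusion. What is missing is the bridge between the dynamics and the $G_i$, which the paper builds in two steps. First, it computes the differences $y_{i-1}-y_i$ in two independent ways --- once by solving \eqref{eq:def_fg} for $y_i$ and substituting into the second equation of $q$-$P_{(n+1,n+1)}$, which after a forward shift produces the $\overline{g}_i$, and once from the unshifted relations, which produces only the $g_i$ --- and equates the results, obtaining the coupled bilinear system \eqref{eq:qFST_g_lem} between $\{\overline{g}_i\}$ and $\{f_i,g_i,F_i\}$. Second, it substitutes the ansatz $\overline{g}_i=F_{i+1}G_i/(g_iF_iG_{i+1})$ with the $G_i$ treated as unknowns, turning \eqref{eq:qFST_g_lem} into a linear system with tridiagonal-plus-corner coefficient matrix, and shows by the explicit determinant evaluation of Lemma \ref{lem:eq_qFST_g_lem_det} that this system has a unique solution; only then does the verification that the explicit polynomials $G_i$ solve it (your recursion, together with the cyclic closing relation involving $G_{n+1}$ and $G_1$) yield \eqref{eq:qFST_g}. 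Without deriving \eqref{eq:qFST_g_lem} and without the uniqueness argument, the recursion for the $G_i$ proves nothing about $\overline{g}_i$. Note also that your hope that $G_i$ equals a common factor times an explicit expression in $\underline{x_i},y_i$ alone is unlikely to materialize: $\overline{g}_i$ involves $\overline{y}_i$, which is not an explicit function of single-time data in the original variables --- making the forward evolution explicit is precisely the content of the theorem being proved.
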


System \eqref{eq:qFST_f}, \eqref{eq:qFST_g} is equivalent to $q$-$P_{\rm{VI}}$ in the case of $n=1$.
In this section we prove this theorem.
We also discuss the relationship between the dependent variables $f_i,g_i,h$ $(i=1,\ldots,n)$ and the ones $x_j,y_j$ $(j=1,\ldots,n+1)$ in more detail at the end of this section.

\subsection{Proof of equation \eqref{eq:qFST_f}}

Definition of dependent variables \eqref{eq:def_fg} implies
\begin{equation}\label{eq:qFST_f_a}
	g_i - a_{i+1} = -a_{i+1}\frac{\underline{x_i}-\underline{x_{i+1}}}{\underline{x_i}(1+\underline{x_{i+1}}y_i)}\quad (i=0,1,\ldots,n).
\end{equation}
The first equation of $q$-$P_{(n+1,n+1)}$ can be rewritten to
\begin{equation}\label{eq:qFST_f_b}
	b_i - g_i = \frac{1+\underline{x_i}y_i}{\underline{x_i}}(x_i-x_{i+1})\quad (i=0,1,\ldots,n).
\end{equation}
Combining them, we obtain
\begin{equation}\begin{split}\label{eq:qFST_f_ab}
	\frac{(b_0-g_0)(g_0-a_1)}{g_0} &= -\frac{(\underline{tx_{n+1}}-\underline{x_1})(tx_{n+1}-x_1)}{\underline{tx_{n+1}}\underline{x_1}}, \\
	\frac{(b_i-g_i)(g_i-a_{i+1})}{g_i} &= -\frac{(\underline{x_i}-\underline{x_{i+1}})(x_i-x_{i+1})}{\underline{x_i}\underline{x_{i+1}}}\quad (i=1,\ldots,n).
\end{split}\end{equation}
Hence we can derive equation \eqref{eq:qFST_f} by using
\[
	F_i = \frac{t(t-1)x_i}{tx_{n+1}-x_1}\quad (i=1,\ldots.n+1).
\]

\subsection{Proof of equations \eqref{eq:qFST_g} and \eqref{eq:qFST_h}}

Equation \eqref{eq:qFST_f_a} can be rewritten to
\[\begin{split}
	y_0 &= -\frac{\underline{x_0}g_0-a_1\underline{x_1}}{\underline{x_0}\underline{x_1}(g_0-a_1)} = -\frac{t\underline{x_{n+1}}g_0-qa_1\underline{x_1}}{t\underline{x_{n+1}}\underline{x_1}(g_0-a_1)} = \frac{q}{t}y_{n+1}, \\
	y_i &= -\frac{\underline{x_i}g_i-a_{i+1}\underline{x_{i+1}}}{\underline{x_i}\underline{x_{i+1}}(g_i-a_{i+1})}\quad (i=1,\ldots,n).
\end{split}\]
Substituting it to the second equation of $q$-$P_{(n+1,n+1)}$, we obtain
\[\begin{split}
	\underline{y_{i-1}} - \underline{y_i} &= \frac{\underline{x_{i-1}}g_{i-1}}{\underline{x_i}(\underline{x_{i-1}}-\underline{x_i})} - \frac{a_i}{\underline{x_{i-1}}-\underline{x_i}} + \frac{b_ia_{i+1}\underline{x_{i+1}}}{\underline{x_i}(\underline{x_i}-\underline{x_{i+1}})g_i} - \frac{b_i}{\underline{x_i}-\underline{x_{i+1}}}\quad (i=1,\ldots,n), \\
	\underline{y_n} - \underline{y_{n+1}} &= \frac{\underline{x_n}g_n}{\underline{x_{n+1}}(\underline{x_n}-\underline{x_{n+1}})} - \frac{a_{n+1}}{\underline{x_n}-\underline{x_{n+1}}} + \frac{qb_{n+1}a_1\underline{x_1}}{\underline{x_{n+1}}(t\underline{x_{n+1}}-q\underline{x_1})g_0} - \frac{b_{n+1}t}{t\underline{x_{n+1}}-q\underline{x_1}}.
\end{split}\]
It implies
\begin{equation}\begin{split}\label{eq:qFST_g_lem_proof_1}
	&b_i(x_{i-1}-x_i) + a_i(x_i-x_{i+1}) + (x_{i-1}-x_i)(x_i-x_{i+1})(y_{i-1}-y_i) \\
	&= \frac{x_{i-1}}{x_i}(x_i-x_{i+1})\overline{g_{i-1}} + b_ia_{i+1}\frac{x_{i+1}}{x_i}(x_{i-1}-x_i)\frac{1}{\overline{g_i}}\quad (i=1,\ldots,n), \\
	&b_{n+1}t(x_n-x_{n+1}) + a_{n+1}(tx_{n+1}-x_1) + (x_n-x_{n+1})(tx_{n+1}-x_1)(y_n-y_{n+1}) \\
	&= \frac{x_n}{x_{n+1}}(tx_{n+1}-x_1)\overline{g_n} + b_{n+1}a_1\frac{x_1}{x_{n+1}}(x_n-x_{n+1})\frac{1}{\overline{g_0}}.
\end{split}\end{equation}
On the other hand, equation \eqref{eq:qFST_f_b} can be rewritten to
\[\begin{split}
	y_0 &= -\frac{g_0-b_0}{x_0-x_1} - \frac{1}{\underline{x_0}} = -\frac{g_0-qb_{n+1}}{tx_{n+1}-x_1} - \frac{q}{t\underline{x_{n+1}}} = \frac{q}{t}y_{n+1}, \\
	y_i &= -\frac{g_i-b_i}{x_i-x_{i+1}} - \frac{1}{\underline{x_i}}\quad (i=1,\ldots,n).
\end{split}\]
Combining it with equation \eqref{eq:qFST_f_ab}, we obtain
\[\begin{split}
	y_{i-1} - y_i &= \frac{g_i-b_i}{x_i-x_{i+1}} + \frac{1}{\underline{x_i}} - \frac{g_{i-1}-b_{i-1}}{x_{i-1}-x_i} - \frac{1}{\underline{x_{i-1}}} \\
	&= \frac{g_i}{x_i-x_{i+1}} - \frac{b_i}{x_i-x_{i+1}} + \frac{b_{i-1}a_i}{(x_{i-1}-x_i)g_{i-1}} - \frac{a_i}{x_{i-1}-x_i}\quad (i=1,\ldots,n), \\
	y_n - y_{n+1} &= \frac{tg_0-qb_{n+1}t}{q(tx_{n+1}-x_n)} + \frac{1}{\underline{x_{n+1}}} - \frac{g_n-b_n}{x_n-x_{n+1}} - \frac{1}{\underline{x_n}} \\
	&= \frac{tg_0}{q(tx_{n+1}-x_1)} - \frac{b_{n+1}t}{tx_{n+1}-x_1} + \frac{b_na_{n+1}}{(x_n-x_{n+1})g_n} - \frac{a_{n+1}}{x_n-x_{n+1}}.
\end{split}\]
We can rewrite it to
\begin{equation}\begin{split}\label{eq:qFST_g_lem_proof_2}
	&b_i(x_{i-1}-x_i) + a_i(x_i-x_{i+1}) + (x_{i-1}-x_i)(x_i-x_{i+1})(y_{i-1}-y_i) \\
	&= (x_{i-1}-x_i)g_i + b_{i-1}a_i(x_i-x_{i+1})\frac{1}{g_{i-1}}\quad (i=1,\ldots,n), \\
	&b_{n+1}t(x_n-x_{n+1}) + a_{n+1}(tx_{n+1}-x_1) + (x_n-x_{n+1})(tx_{n+1}-x_1)(y_n-y_{n+1}) \\
	&= \frac tq(x_n-x_{n+1})g_0 + b_na_{n+1}(tx_{n+1}-x_1)\frac{1}{g_n}.
\end{split}\end{equation}
Combining equations \eqref{eq:qFST_g_lem_proof_1} and \eqref{eq:qFST_g_lem_proof_2}, we obtain
\begin{equation}\begin{split}\label{eq:qFST_g_lem}
	tf_1\frac{F_{n+1}}{F_1}\overline{g}_0 + tb_1a_2\frac{F_2}{F_1}\frac{1}{\overline{g}_1} &= tg_1 + qb_{n+1}a_1f_1\frac{1}{g_0}, \\
	f_i\frac{F_{i-1}}{F_i}\overline{g}_{i-1} + b_ia_{i+1}f_{i-1}\frac{F_{i+1}}{F_i}\frac{1}{\overline{g}_i} &= f_{i-1}g_i + b_{i-1}a_if_i\frac{1}{g_{i-1}}\quad (i=2,\ldots,n), \\
	t\frac{F_n}{F_{n+1}}\overline{g}_n + b_{n+1}a_1f_n\frac{F_1}{F_{n+1}}\frac{1}{\overline{g}_0} &= \frac tqf_ng_0 + b_na_{n+1}t\frac{1}{g_n}.
\end{split}\end{equation}

Since equation \eqref{eq:qFST_g_lem} is equivalent to \eqref{eq:qFST_g} in the case of $n=1$, we consider the case of $n\geq2$.
Then equation \eqref{eq:qFST_g_lem} is transformed to
\begin{equation}\begin{split}\label{eq:qFST_g_lem_linear_1}
	G_{n+1} - q^{n-1}t\left(t\frac{g_0g_1}{f_1}+qb_{n+1}a_1\right)G_1 + q^{n-1}b_1a_2t^2\frac{g_0g_1}{f_1}G_2 &= 0, \\
	G_{i-1} - \left(\frac{f_{i-1}g_{i-1}g_i}{f_i}+b_{i-1}a_i\right)G_i + b_ia_{i+1}\frac{f_{i-1}g_{i-1}g_i}{f_i}G_{i+1} &= 0\quad (i=2,\ldots,n), \\
	G_n - \left(\frac1qf_ng_ng_0+b_na_{n+1}\right)G_{n+1} + q^{n-1}b_{n+1}a_1tf_ng_ng_0G_1 &= 0,
\end{split}\end{equation}
via a transformation
\[\begin{split}
	\overline{g}_i &= \frac{F_{i+1}G_i}{g_iF_iG_{i+1}}\quad (i=1,\ldots,n), \\
	\overline{g}_0 &= \frac{1}{q^{n/2}t}\prod_{i=1}^{n}\frac{1}{\overline{g}_i} = \frac{1}{q^{n-1}t^2}\frac{F_1G_{n+1}}{g_0F_{n+1}G_1}.
\end{split}\]
Furthermore equation \eqref{eq:qFST_g_lem_linear_1} is reduced to a system of linear equations
\begin{equation}\label{eq:qFST_g_lem_mat}
	\begin{bmatrix}1&-\alpha_1&\beta_1\\&1&-\alpha_2&\beta_2\\&&1&-\alpha_3&\beta_3\\&&&&\ddots\\&&&&&-\alpha_{n-2}&\beta_{n-2}\\&&&&&1&-\alpha_{n-1}\\\beta_n&&&&&&1\end{bmatrix}\begin{bmatrix}G_1\\G_2\\G_3\\\vdots\\G_{n-2}\\G_{n-1}\\G_n\end{bmatrix} = \begin{bmatrix}0\\0\\0\\\vdots\\0\\-\beta_{n-1}G_{n+1}\\\alpha_nG_{n+1}\end{bmatrix},
\end{equation}
where
\[\begin{split}
	&\alpha_i = \frac{f_ig_ig_{i+1}}{f_{i+1}} + b_ia_{i+1},\quad
	\beta_i = b_{i+1}a_{i+2}\frac{f_ig_ig_{i+1}}{f_{i+1}}\quad (i=1,\ldots,n-1), \\
	&\alpha_n = \frac1qf_ng_ng_0 + b_na_{n+1},\quad
	\beta_n = q^{n-1}b_{n+1}a_1tf_ng_ng_0.
\end{split}\]
In fact equation \eqref{eq:qFST_g_lem_linear_1} can be rewritten to
\begin{equation}\begin{split}\label{eq:qFST_g_lem_linear_2}
	G_1 - b_1a_2G_2 &= \frac{qf_1}{tg_0g_1}\left(\frac{1}{q^nt}G_{n+1}-b_{n+1}a_1G_1\right), \\
	G_i - b_ia_{i+1}G_{i+1} &= \frac{f_i}{f_{i-1}g_{i-1}g_i}(G_{i-1}-b_{i-1}a_iG_i)\quad (i=2,\ldots,n), \\
	\frac{1}{q^nt}G_{n+1} - b_{n+1}a_1G_1 &= \frac{1}{q^{n-1}tf_ng_ng_0}(G_n-b_na_{n+1}G_{n+1}),
\end{split}\end{equation}
and the first equation of \eqref{eq:qFST_g_lem_linear_2} is derived from a combination of the other equations as follows.
\[\begin{split}
	\frac{1}{q^nt}G_{n+1} - b_{n+1}a_1G_1 &= \frac{1}{q^{n-1}tf_ng_ng_0}(G_n-b_na_{n+1}G_{n+1}) \\
	&= \frac{1}{q^{n-1}tf_{n-1}g_{n-1}g_n^2g_0}(G_{n-1}-b_{n-1}a_nG_n) \\
	&\,\,\,\vdots \\
	&= \frac{1}{q^{n-1}tf_1g_1g_2^2\ldots g_n^2g_0}(G_1-b_1a_2G_2) \\
	&= \frac{tg_0g_1}{qf_1}(G_1-b_1a_2G_2).
\end{split}\]
We will solve system \eqref{eq:qFST_g_lem_mat} for $G_i$ $(i=1,\ldots,n)$ in order to derive equation \eqref{eq:qFST_g}.
For this purpose we introduce the following lemma.

\begin{lem}\label{lem:eq_qFST_g_lem_det}
The determinant of the coefficient matrix of system \eqref{eq:qFST_g_lem_mat} is given by
\[
	\begin{vmatrix}1&-\alpha_1&\beta_1\\&1&-\alpha_2&\beta_2\\&&1&-\alpha_3&\beta_3\\&&&&\ddots\\&&&&&-\alpha_{n-2}&\beta_{n-2}\\&&&&&1&-\alpha_{n-1}\\\beta_n&&&&&&1\end{vmatrix} = 1 + q^{n/2}\sum_{j=1}^{n}b_{n+1}a_1\prod_{k=1}^{j-1}b_ka_{k+1}\frac{\prod_{l=j+1}^{n}g_l}{\prod_{l=1}^{j-1}g_l}f_j.
\]
\end{lem}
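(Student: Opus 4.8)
The plan is to exploit the near-triangular shape of the coefficient matrix: apart from the single corner entry $\beta_n$ in position $(n,1)$, it is upper triangular with every diagonal entry equal to $1$. First I would expand the determinant along the first column, which has only two nonzero entries. The $(1,1)$-entry $1$ is multiplied by the minor obtained by deleting the first row and column; this minor is again upper triangular with unit diagonal, hence equals $1$. The only other nonzero entry in that column is the corner $\beta_n$, whose cofactor is $(-1)^{n+1}$ times the minor $\Delta$ obtained by deleting row $n$ and column $1$. Reindexing shows that $\Delta$ is the $(n-1)\times(n-1)$ tridiagonal determinant with diagonal $(-\alpha_1,\ldots,-\alpha_{n-1})$, superdiagonal $(\beta_1,\ldots,\beta_{n-2})$ and all subdiagonal entries equal to $1$. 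Thus the determinant equals $1+(-1)^{n+1}\beta_n\Delta$.

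Next I would evaluate $\Delta$ by the standard three-term recurrence for tridiagonal determinants. Writing $D_k$ for the leading $k\times k$ principal minor of $\Delta$, one has $D_k=-\alpha_kD_{k-1}-\beta_{k-1}D_{k-2}$ with $D_0=1$ and $D_1=-\alpha_1$. Setting $c_i=b_ia_{i+1}$ and $p_i=f_ig_ig_{i+1}/f_{i+1}$, so that $\alpha_i=p_i+c_i$ and $\beta_i=c_{i+1}p_i$, I would prove by induction that $(-1)^kD_k=\sum_{j=0}^{k}\prod_{i=1}^{j}c_i\prod_{i=j+1}^{k}p_i$; the inductive step is a short telescoping identity that uses precisely the relation $\beta_{k-1}=c_kp_{k-1}$. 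The signs then combine as $(-1)^{n+1}(-1)^{n-1}=1$, so after shifting the summation index the determinant becomes $1+\beta_n\sum_{j=1}^{n}\bigl(\prod_{i=1}^{j-1}c_i\bigr)\bigl(\prod_{i=j}^{n-1}p_i\bigr)$.

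Finally I would substitute back and simplify each summand. The factors $f_i/f_{i+1}$ inside $\prod_{i=j}^{n-1}p_i$ telescope to $f_j/f_n$, so the $f_n$ supplied by $\beta_n=q^{n-1}b_{n+1}a_1tf_ng_ng_0$ cancels and leaves $f_j$. Grouping the surviving $g$-factors, the product $g_ng_0\prod_{i=j}^{n-1}g_ig_{i+1}$ collapses to $g_0\bigl(\prod_{i=j}^{n}g_i\bigr)\bigl(\prod_{i=j+1}^{n}g_i\bigr)$. The decisive point is to turn the prefactor $q^{n-1}$ into the asserted $q^{n/2}$, and this is exactly where the constraint on $g_0$ enters: since $g_0=q^{-(n-2)/2}t^{-1}\prod_{i=1}^{n}g_i^{-1}$, one has $tg_0\prod_{i=1}^{n}g_i=q^{(2-n)/2}$. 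Using this identity to eliminate $tg_0\prod_{i=1}^{n}g_i$ converts the $g$-product into $\prod_{l=j+1}^{n}g_l\big/\prod_{l=1}^{j-1}g_l$ and multiplies $q^{n-1}$ by $q^{(2-n)/2}=q^{1-n/2}$, producing the factor $q^{n/2}$. Matching the remaining $\prod_{i=1}^{j-1}c_i=\prod_{k=1}^{j-1}b_ka_{k+1}$ against the stated expression then completes the proof.

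I expect the main obstacle to lie in the bookkeeping of this last paragraph rather than in any conceptual point: one must track the exponent of $q$ through the telescoping and invoke the $g_0$-constraint at the correct moment, while keeping the three index ranges—those for the $b_ka_{k+1}$ product, for the $g_l$ numerator, and for the $g_l$ denominator—aligned with the asserted formula. The cofactor expansion and the tridiagonal recurrence are routine; it is the constraint \eqref{eq:qFST_constraint}, repackaged through the definition of $g_0$, that makes the powers of $q$ balance.
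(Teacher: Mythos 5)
Your proposal is correct and follows essentially the same route as the paper: a cofactor expansion reducing the determinant to $1+(-1)^{n-1}\beta_n\Delta_{n-1}$ with $\Delta_{n-1}$ tridiagonal, the standard three-term recurrence solved in closed form (your inductive formula in $c_i,p_i$ is exactly the paper's $\Delta_i=(-1)^i\sum_{j}\prod_k b_ka_{k+1}\prod_l g_lg_{l+1}f_j/f_{i+1}$ after an index shift), and the substitution of $\beta_n$ together with the relation $tg_0\prod_{i=1}^{n}g_i=q^{(2-n)/2}$ to produce the factor $q^{n/2}$. Your write-up is in fact slightly more explicit than the paper's, which states the solution of the recurrence without carrying out the induction.
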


\begin{proof}
We obtain
\[
	\begin{vmatrix}1&-\alpha_1&\beta_1\\&1&-\alpha_2&\beta_2\\&&1&-\alpha_3&\beta_3\\&&&&\ddots\\&&&&&-\alpha_{n-2}&\beta_{n-2}\\&&&&&1&-\alpha_{n-1}\\\beta_n&&&&&&1\end{vmatrix} = 1 + (-1)^{n-1}\beta_n\Delta_{n-1}.
\]
where
\[
	\Delta_i = \begin{vmatrix}-\alpha_1&\beta_1\\1&-\alpha_2&\beta_2\\&1&-\alpha_3&\beta_3\\&&&\ddots\\&&&&-\alpha_{i-1}&\beta_{i-1}\\&&&&1&-\alpha_i\end{vmatrix},
\]
via a cofactor expansion.
The determinant of the tridiagonal matrix $\Delta_i$ satisfies a recurrence relation
\[
	\Delta_0 = 1,\quad
	\Delta_1 = -\alpha_1,\quad
	\Delta_i = -\alpha_i\Delta_{i-1} - \beta_{i-1}\Delta_{i-2}\quad (i=2,3,\ldots).
\]
By solving this relation, we obtain
\[
	\Delta_i = (-1)^i\sum_{j=1}^{i+1}\prod_{k=1}^{j-1}b_ka_{k+1}\prod_{l=j}^{i}g_lg_{l+1}\frac{f_j}{f_{i+1}}\quad (i=2,3,\ldots).
\]
It implies
\[\begin{split}
	(-1)^{n-1}\beta_n\Delta_{n-1} &= q^{n-1}b_{n+1}a_1tf_ng_ng_0\sum_{j=1}^{n}\prod_{k=1}^{j-1}b_ka_{k+1}\prod_{l=j}^{n-1}g_lg_{l+1}\frac{f_j}{f_n} \\
	&= q^{n/2}\sum_{j=1}^{n}b_{n+1}a_1\prod_{k=1}^{j-1}b_ka_{k+1}\frac{\prod_{l=j+1}^{n}g_l}{\prod_{l=1}^{j-1}g_l}f_j
\end{split}\]
\end{proof}

Thanks to Lemma \ref{lem:eq_qFST_g_lem_det} we can find that system \eqref{eq:qFST_g_lem_mat} admits only one solution for $G_i$ $(i=1,\ldots,n)$.
Hence we only have to verify that
\[
	G_i = \sum_{j=i}^{n}\prod_{k=i}^{j-1}b_ka_{k+1}\frac{\prod_{l=j+1}^{n}g_l}{\prod_{l=1}^{j-1}g_l}f_j + q^{n/2}t\prod_{k=i}^{n}b_ka_{k+1} + q^nt\sum_{j=1}^{i-1}\frac{b_{n+1}a_1\prod_{k=1}^{n}b_ka_{k+1}}{\prod_{k=j}^{i-1}b_ka_{k+1}}\frac{\prod_{l=j+1}^{n}g_l}{\prod_{l=1}^{j-1}g_l}f_j,
\]
satisfy system \eqref{eq:qFST_g_lem_mat}.
It is shown since
\[\begin{split}
	G_i - b_ia_{i+1}G_{i+1} &= \left(1-q^ntb_{n+1}a_1\prod_{k=1}^{n}b_ka_{k+1}\right)\frac{\prod_{l=i+1}^{n}g_l}{\prod_{l=1}^{i-1}g_l}f_i\quad (i=1,\ldots,n), \\
	\frac{1}{q^nt}G_{n+1} - b_{n+1}a_1G_1 &= \frac{1}{q^{n/2}}\left(1-q^ntb_{n+1}a_1\prod_{k=1}^{n}b_ka_{k+1}\right),
\end{split}\]
satisfy equation \eqref{eq:qFST_g_lem_linear_2}.
From the above we have derived equation \eqref{eq:qFST_g}.

In the last we prove equation \eqref{eq:qFST_h}.
The first equation of \eqref{eq:qFST_f_ab} is rewritten to
\[
	\overline{tx_{n+1}-x_1} = -\frac{(b_0-\overline{g}_0)(\overline{g}_0-a_1)}{\overline{g}_0}\frac{tx_{n+1}x_1}{tx_{n+1}-x_1},
\]
from which we obtain equation \eqref{eq:qFST_h} immediately.

\subsection{Relationship between two types of dependent variables}

In equation \eqref{eq:def_fg} we define the dependent variables $(f_i,g_i,h)$ as rational functions in $(\underline{x_j},y_j)$ with constraint \eqref{eq:qFST_constraint}.
Conversely, the dependent variables $(\underline{x_j},y_j)$ are given as rational functions in $(f_i,g_i,h)$.
We obtain
\[
	\frac{1}{\underline{x_1}} - \frac{1}{\underline{tx_{n+1}}} = -\frac{(b_0-g_0)(g_0-a_1)}{g_0h},
\]
from equation \eqref{eq:qFST_h} and
\[
	\frac{1}{\underline{x_{i+1}}} - \frac{1}{\underline{x_i}} = -\frac{t(b_i-g_i)(g_i-a_{i+1})}{f_ig_ih}.\quad (i=1,\ldots,n),
\]
from equation \eqref{eq:qFST_f_ab}.
They imply
\begin{equation}\begin{split}\label{eq:def_fg2x}
	\frac{1}{\underline{x_i}} &= \frac{t}{q-t}\frac{(b_0-g_0)(g_0-a_1)}{g_0h} \\
	&\quad + \frac{t}{q-t}\sum_{j=1}^{i-1}\frac{t(b_j-g_j)(g_j-a_{j+1})}{f_jg_jh} + \frac{q}{q-t}\sum_{j=i}^{n}\frac{t(b_j-g_j)(g_j-a_{j+1})}{f_jg_jh}\quad (i=1,\ldots,n+1).
\end{split}\end{equation}
We also obtain
\[
	y_i = -\frac{g_i/\underline{x_{i+1}}-a_{i+1}/\underline{x_i}}{g_i-a_{i+1}}\quad (i=1,\ldots,n),
\]
from equation \eqref{eq:qFST_f_a}.
The rest variable $y_{n+1}$ is given by constraint \eqref{eq:qFST_constraint}.

\begin{rem}
Equation \eqref{eq:def_fg2x} allows us to express the backward $q$-shifts $\underline{f_1},\ldots,\underline{f_n}$ as rational functions in $(f_i,g_i)$.
Also we can express the backward $q$-shifts $\underline{g_1},\ldots,\underline{g_n}$ as rational functions in $(f_i,g_i)$ by solving equation \eqref{eq:qFST_g_lem} for $g_1,\ldots,g_n$.
We don't give its detail here.
\end{rem}

We next state a relationship between the variables $(f_i,g_i,h)$ and the ones $(x_j,y_j)$.
Equation \eqref{eq:qFST_g_lem_proof_2} is rewritten to
\begin{equation}\begin{split}\label{eq:def_fg2y}
	\frac qty_{n+1} - y_1 &= t(g_1-b_1)\frac{1}{f_1h} + a_1\left(qb_{n+1}\frac{1}{g_0}-1\right)\frac1h \\
	y_{i-1} - y_i &= t(g_i-b_i)\frac{1}{f_ih} + a_it\left(b_{i-1}\frac{1}{g_{i-1}}-1\right)\frac{1}{f_{i-1}h}\quad (i=2,\ldots,n), \\
	y_n - y_{n+1} &= \frac tq(g_0-qb_{n+1})\frac1h + a_{n+1}t\left(b_n\frac{1}{g_n}-1\right)\frac{1}{f_nh}.
\end{split}\end{equation}
Recall that
\[
	tx_{n+1} - x_1 = h,\quad
	x_i - x_{i+1} = \frac{f_ih}{t}\quad (i=1,\ldots,n).
\]
Hence we can give the variables $(x_j,y_j)$ as rational functions in $(f_i,g_i,h)$.
Conversely, since equation \eqref{eq:def_fg2y} admits only one solution for $g_1,\ldots,g_n$, the variables $(f_i,g_i,h)$ are given as rational functions in $(x_j,y_j)$.
Then one constraint between the variables $(x_j,y_j)$ is obtained together.
We don't give their explicit formulas here.
Those facts allows us to express the backward $q$-shifts $(\underline{x_j},\underline{y_j})$ as functions in $(x_j,y_j)$.

\section{Affine Weyl group symmetry}\label{sec:Affine_Weyl}

As is seen in Section \ref{sec:Review_qFST}, the system $q$-$P_{(n+1,n+1)}$ is invariant under the action of the group of symmetries $\langle r_0,\ldots,r_{2n+1},\pi\rangle\simeq\widetilde{W}(A^{(1)}_{2n+1})$.
This action can be restricted to system \eqref{eq:qFST_f}, \eqref{eq:qFST_g}.

\begin{thm}
The birational transformations $r_0,\ldots,r_{2n+1},\pi$ defined by \eqref{eq:Aff_Wey_even}, \eqref{eq:Aff_Wey_odd} and \eqref{eq:Aff_Wey_rot} act on the dependent variables $f_i,g_i$ $(i=1,\ldots,n)$ as follows.
\begin{equation}\label{eq:aFST_sym_even}
	r_{2j-2}(f_i) = f_i,\quad
	r_{2j-2}(g_i) = g_i,
\end{equation}
for $j=1,\ldots,n+1$,
\begin{equation}\begin{split}\label{eq:aFST_sym_odd}
	&r_1(f_1) = f_1\frac{R_1^{a,a,a}}{R_1^{b,a,b}},\quad
	r_1(g_1) = g_1\frac{R_1^{b,a,a}}{R_1^{b,b,b}},\quad
	r_1(f_i) = f_i\frac{R_1^{b,a,a}}{R_1^{b,a,b}},\quad
	r_1(g_i) = g_i\quad (i\neq1), \\
	&r_{2j-1}(f_{j-1}) = f_{j-1}\frac{R_j^{b,a,b}}{R_j^{b,a,a}},\quad
	r_{2j-1}(g_{j-1}) = g_{j-1}\frac{R_j^{b,b,b}}{R_j^{b,a,a}},\quad
	r_{2j-1}(f_j) = f_j\frac{R_j^{a,a,a}}{R_j^{b,a,a}},\quad
	r_{2j-1}(g_j) = g_j\frac{R_j^{b,a,a}}{R_j^{b,b,b}}, \\
	&r_{2j-1}(f_i) = f_i,\quad
	r_{2j-1}(g_i) = g_i\quad (i\neq j-1,j), \\
	&r_{2n+1}(f_n) = f_n\frac{R_{n+1}^{b,a,b}}{R_{n+1}^{a,a,a}},\quad
	r_{2n+1}(g_n) = g_n\frac{R_{n+1}^{b,b,b}}{R_{n+1}^{b,a,a}},\quad
	r_{2n+1}(f_i) = f_i\frac{R_{n+1}^{b,a,a}}{R_{n+1}^{a,a,a}},\quad
	r_{2n+1}(g_i) = g_i\quad (i\neq n), 
\end{split}\end{equation}
for $j=2,\ldots,n$ and
\begin{equation}\begin{split}\label{eq:aFST_sym_rot}
	&\pi(f_i) = \frac{q^2}{t}\frac{(g_iR_i^{*}-b_{i+1}R_{i+1}^{*})(b_{i+1}-g_{i+1})(R_{i+1}^{*}+1-\frac{t}{q})f_1}{(g_0R_0^{*}-b_1R_1^{*})(b_1-g_1)(R_1^{*}+1-\frac{t}{q})f_{i+1}},\quad
	\pi(g_i) = \frac{a_{i+1}}{q^{\rho_1}}\frac{b_{i+1}R_{i+1}^{*}}{g_iR_i^{*}}\quad (i\neq n), \\
	&\pi(f_n) = q\frac{(g_nR_n^{*}-b_{n+1}R_0^{*})(b_0-g_0)(R_0^{*}+1-\frac{t}{q})f_1}{(g_0R_0^{*}-b_1R_1^{*})(b_1-g_1)(R_1^{*}+1-\frac{t}{q})f_0},\quad
	\pi(g_n) = \frac{a_{n+1}}{q^{\rho_1}}\frac{b_{n+1}R_0^{*}}{g_nR_n^{*}},
\end{split}\end{equation}
where
\[\begin{split}
	R_j^{\alpha,\beta,\gamma} &= (g_j-\alpha_j)\frac{1}{f_j} + \left(\beta_jb_{j-1}\frac{1}{g_{j-1}}-\gamma_j\right)\frac{1}{f_{j-1}}\quad (j\neq n+1), \\
	R_{n+1}^{\alpha,\beta,\gamma} &= \frac{1}{q}(g_0-q\alpha_{n+1}) + \left(\beta_{n+1}b_n\frac{1}{g_n}-\gamma_{n+1}\right)\frac{1}{f_n},
\end{split}\]
and
\[\begin{split}
	R_i^{*} &= -\frac{f_i}{b_i-g_i}\left(\frac{t}{q}\sum_{j=1}^{i}R_j^{b,a,a}+\sum_{j=i+1}^{n+1}R_j^{b,a,a}\right) \\
	&= \frac{f_i}{b_i-g_i}\left(\sum_{j=0}^{i-1}\frac{\frac{t}{q}(1-\frac{a_{j+1}}{g_j})(b_j-g_j)}{f_j}+\frac{(\frac{t}{q}-\frac{a_{i+1}}{g_i})(b_i-g_i)}{f_i}+\sum_{j=i+1}^{n}\frac{(1-\frac{a_{j+1}}{g_j})(b_j-g_j)}{f_j}\right).
\end{split}\]
Here we set $f_0=t$.
\end{thm}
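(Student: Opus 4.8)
The plan is to compute the action of each generator on $f_i$ and $g_i$ by substituting the prescribed action on $(x_j,y_j,a_j,b_j,t)$ from \eqref{eq:Aff_Wey_even}, \eqref{eq:Aff_Wey_odd} and \eqref{eq:Aff_Wey_rot} directly into the definitions \eqref{eq:def_fg}, and then to simplify using the structural identities already established for the system. The first step is therefore to set up a dictionary expressing the auxiliary quantities $R_j^{\alpha,\beta,\gamma}$ and $R_i^{*}$ in terms of the original variables. The decisive entry is $R_j^{b,a,a}=\frac{h}{t}(y_{j-1}-y_j)$, which is exactly \eqref{eq:def_fg2y} rewritten with $h=tx_{n+1}-x_1$ and $x_i-x_{i+1}=f_ih/t$; summing this telescopically and using $y_0=\frac{q}{t}y_{n+1}$ collapses the combination defining $R_i^{*}$ into a single rational expression in $\underline{x_i}$ and $y_i$. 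Alongside this I would record the elementary one-line identities $R_j^{b,a,b}-R_j^{b,a,a}=(a_j-b_j)/f_{j-1}$ and $R_j^{a,a,a}-R_j^{b,a,a}=(b_j-a_j)/f_j$, which follow at once from the definition of $R_j^{\alpha,\beta,\gamma}$ and encode how the superscripts respond to the interchange $a_j\leftrightarrow b_j$.

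Since every reflection fixes $t$, it commutes with the backward shift, so substitution into \eqref{eq:def_fg} is unambiguous. The even reflections are then immediate: as $r_{2j-2}$ fixes every $x_k$ and $f_i$ depends only on the $x_k$, we get $r_{2j-2}(f_i)=f_i$, which is \eqref{eq:aFST_sym_even}. For $g_i$ the only factor that feels the reflection is $g_{j-1}$, in which both $a_j\mapsto b_{j-1}$ and $y_{j-1}$ change; substituting $r_{2j-2}(y_{j-1})$ and using \eqref{eq:qFST_f_a} and \eqref{eq:qFST_f_b} to relate the two time levels, one checks that the two modifications cancel and $g_{j-1}$ is preserved.

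For the odd reflections $r_{2j-1}$, which move only $x_j$ and interchange $a_j\leftrightarrow b_j$, the computation is a short ratio calculation. Writing the shift as $x_j\mapsto x_j-(a_j-b_j)/(y_{j-1}-y_j)$ and replacing $y_{j-1}-y_j=\frac{t}{h}R_j^{b,a,a}$, the differences $x_{j-1}-x_j$ and $x_j-x_{j+1}$ transform so that $r_{2j-1}(f_{j-1})=f_{j-1}R_j^{b,a,b}/R_j^{b,a,a}$ and $r_{2j-1}(f_j)=f_jR_j^{a,a,a}/R_j^{b,a,a}$ follow directly from the two one-line identities above; the $f_i$ with $i\neq j-1,j$ pick up the common factor coming from the change of $h$ in the boundary cases $j=1$ and $j=n+1$, producing the $R_1^{\cdots}$ and $R_{n+1}^{\cdots}$ ratios in \eqref{eq:aFST_sym_odd}. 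The companion formulas for $g_i$ are obtained the same way, now tracking $g_i-a_{i+1}$ through \eqref{eq:qFST_f_a} and $b_i-g_i$ through \eqref{eq:qFST_f_b}.

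The rotation $\pi$, giving \eqref{eq:aFST_sym_rot}, is the main obstacle. Unlike the reflections, $\pi$ reverses the time direction ($t\mapsto q^2/t$) and sends current-time variables to backward shifts ($x_i\mapsto q^{-2\rho_1}t^{\rho_1}\underline{y_i}$, $y_i\mapsto q^{\rho_1}t^{-\rho_1}\underline{x_{i+1}}$), so it does not commute naively with $\underline{\,\cdot\,}$. Substituting into \eqref{eq:def_fg} gives $\pi(f_i)=\frac{q^2}{t}(\underline{y_i}-\underline{y_{i+1}})/(\tfrac{q^2}{t}\underline{y_{n+1}}-\underline{y_1})$ and a similar expression for $\pi(g_i)$ in which backward (and, through $\pi(\underline{x_i})$, doubly backward) shifts appear. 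The heart of the proof is to collapse all of this back to a single time level: using \eqref{eq:def_fg2x} and \eqref{eq:def_fg2y} to re-express $\underline{x_i}$ and $y_i$ as rational functions of $(f,g,h)$, and the evolution equations \eqref{eq:qFST_f}, \eqref{eq:qFST_g} to eliminate the remaining shifts, one must show that the resulting combination is exactly the one packaged by $R_i^{*}$. Verifying that $R_i^{*}$ is precisely this combination, and keeping track of the prefactors $q^{\rho_1}$ and $t^{\rho_1}$ via $q^{\rho_1}=(q^na_1b_1\cdots a_{n+1}b_{n+1})^{1/(n+1)}$, is where the bulk of the labour lies and is the step most prone to sign and boundary-index errors.
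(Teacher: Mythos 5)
Your overall strategy is the paper's own: substitute the actions \eqref{eq:Aff_Wey_even}, \eqref{eq:Aff_Wey_odd}, \eqref{eq:Aff_Wey_rot} into the definitions \eqref{eq:def_fg} and collapse everything back to a single time level. Your dictionary is exactly right: $R_j^{b,a,a}=\frac{h}{t}(y_{j-1}-y_j)$ is \eqref{eq:def_fg2y} rewritten, and the telescoped identity $R_i^{*}=\frac{(t-q)h}{qt}\frac{f_iy_i}{b_i-g_i}$ is precisely the paper's \eqref{eq:qFST_sym_rot_proof_3}. Together with your superscript identities $R_j^{b,a,b}-R_j^{b,a,a}=(a_j-b_j)/f_{j-1}$ and $R_j^{a,a,a}-R_j^{b,a,a}=(b_j-a_j)/f_j$, this correctly yields \eqref{eq:aFST_sym_even} and all the $f$-formulas of \eqref{eq:aFST_sym_odd}, including the boundary cases $j=1,n+1$ through the change of $h$; the paper does the same computation by substituting \eqref{eq:def_fg2y} into \eqref{eq:qFST_sym_odd_proof_1}, so this part is a slightly streamlined version of the paper's argument.

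There is, however, a genuine gap in the odd reflections acting on the $g$'s, i.e.\ $r_{2j-1}(g_{j-1})$ and $r_{2j-1}(g_j)$, which you claim follow ``the same way'' from \eqref{eq:qFST_f_a} and \eqref{eq:qFST_f_b}. The variable $g_j$ contains the backward shift $\underline{x_j}$, and since $r_{2j-1}$ acts consistently at every time level, $r_{2j-1}(\underline{x_j})=\underline{x_j}-\frac{a_j-b_j}{\underline{y_{j-1}}-\underline{y_j}}$: the difference $\underline{y_{j-1}}-\underline{y_j}$ lives at the backward time level, where your dictionary (valid at time $t$) does not apply, and it cannot be eliminated by \eqref{eq:qFST_f_a}, \eqref{eq:qFST_f_b} alone, since neither identity contains a backward-shifted $y$. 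This is exactly why the paper, in \eqref{eq:qFST_sym_odd_proof_2}, uses the system $q$-$P_{(n+1,n+1)}$ \emph{twice}: its second equation, $\underline{y_{j-1}}-\underline{y_j}=\frac{a_jy_{j-1}}{1+\underline{x_j}y_{j-1}}-\frac{b_jy_j}{1+\underline{x_j}y_j}$, is the step that descends this quantity to current-time variables, after which \eqref{eq:def_fg2y} finishes the computation. You do recognize the analogous backward-shift problem for $\pi$, but your proposed cure there --- eliminating shifts via \eqref{eq:qFST_f}, \eqref{eq:qFST_g} --- runs the wrong way: those equations give $\overline{f}_i,\overline{g}_i$ in terms of $f_i,g_i$, so you would need their (nontrivial) inversion, the content of the Remark following \eqref{eq:def_fg2x}. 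The paper's route is again the second equation of the original system combined with \eqref{eq:qFST_f_b}, as in \eqref{eq:qFST_sym_rot_proof_1}, followed by \eqref{eq:qFST_sym_rot_proof_2} and \eqref{eq:qFST_sym_rot_proof_3}. In short: right framework and right identities, but the equation-of-motion step that lowers the time level is missing precisely where it is indispensable.
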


Recall that
\[
	b_0 = qb_{n+1},\quad
	g_0 = \frac{1}{q^{(n-2)/2}tg_1\ldots g_n},\quad
	q^{\rho_1} = (q^na_1b_1\ldots a_{n+1}b_{n+1})^{1/(n+1)}.
\]
In this section we prove this theorem.

\subsection{Proof of action \eqref{eq:aFST_sym_even}}

The action of $r_{2j-2}$ on the dependent variables with the exception of $y_{j-1}$ is trivial for any $j=1,\ldots,n+1$.
Hence we only have to verify that $r_{2j-2}(g_{j-1})=g_{j-1}$ is satisfied.
And it is shown by using the first equation of $q$-$P_{(n+1,n+1)}$ as follows.
\[\begin{split}
	r_{2j-2}(g_{j-1}) &= b_{j-1}\frac{\underline{x_j}}{\underline{x_{j-1}}}\frac{1+\underline{x_{j-1}}y_{j-1}-\frac{b_{j-1}-a_j}{x_{j-1}-x_j}\underline{x_{j-1}}}{1+\underline{x_j}y_{j-1}-\frac{b_{j-1}-a_j}{x_{j-1}-x_j}\underline{x_j}} \\
	&= b_{j-1}\frac{\underline{x_j}}{\underline{x_{j-1}}}\frac{a_j(1+\underline{x_{j-1}}y_{j-1})}{b_{j-1}(1+\underline{x_j}y_{j-1})} \\
	&= g_{j-1}.
\end{split}\]

\subsection{Proof of action \eqref{eq:aFST_sym_odd}}

The action of $r_{2j-1}$ on the dependent variables with the exception of $x_j$ is trivial for any $j=1,\ldots,n+1$.
Hence we have to investigate the following actions:
\[\begin{split}
	&r_1(f_1),\quad
	r_1(g_1),\quad
	r_1(f_i)\quad (i\neq1), \\
	&r_{2j-1}(f_{j-1}),\quad
	r_{2j-1}(g_{j-1}),\quad
	r_{2j-1}(f_j),\quad
	r_{2j-1}(g_j), \\
	&r_{2n+1}(f_n),\quad
	r_{2n+1}(g_n),\quad
	r_{2n+1}(f_i)\quad (i\neq n).
\end{split}\]

We first consider the action $r_{2j-1}(f_j)$.
It is described in terms of the variables $x_i,y_i$ as
\begin{equation}\label{eq:qFST_sym_odd_proof_1}
	r_{2j-1}(f_j) = t\frac{x_j-\frac{a_j-b_j}{y_{j-1}-y_j}-x_{j+1}}{tx_{n+1}-x_1} = t\frac{(x_j-x_{j+1})(y_{j-1}-y_j)-(a_j-b_j)}{(tx_{n+1}-x_1)(y_{j-1}-y_j)}.
\end{equation}
Substituting equation \eqref{eq:def_fg2y} to \eqref{eq:qFST_sym_odd_proof_1}, we obtain
\[
	r_{2j-1}(f_j) = f_j\frac{(g_j-a_j)\frac{1}{f_j}+a_j(b_{j-1}\frac{1}{g_{j-1}}-1)\frac{1}{f_{j-1}}}{(g_j-b_j)\frac{1}{f_j}+a_j(b_{j-1}\frac{1}{g_{j-1}}-1)\frac{1}{f_{j-1}}}\quad (j=1,\ldots,n+1).
\]
The other actions on the variables $f_1,\ldots,f_n$ can be shown in a similar way.

We next consider the action $r_{2j-1}(g_j)$.
By using the system $q$-$P_{(n+1,n+1)}$ twice, we can describe it in terms of the variables $x_i,y_i$ as
\begin{equation}\begin{split}\label{eq:qFST_sym_odd_proof_2}
	r_{2j-1}(g_j) &= a_{j+1}\frac{\underline{x_{j+1}}\left(1+\underline{x_j}y_j-\frac{a_j-b_j}{\underline{y_{j-1}}-\underline{y_j}}y_j\right)}{\left(\underline{x_j}-\frac{a_j-b_j}{\underline{y_{j-1}}-\underline{y_j}}\right)(1+\underline{x_{j+1}}y_j)} \\
	&= g_j\frac{a_j\underline{x_j}(y_{j-1}-y_j)}{b_j(1+\underline{x_j}y_{j-1})-a_j(1+\underline{x_j}y_j)} \\
	&= g_j\frac{(x_{j-1}-x_j)(y_{j-1}-y_j)}{(x_{j-1}-x_j)(y_{j-1}-y_j)-(a_j-b_j)(b_{j-1}\frac{1}{g_{j-1}}-1)}.
\end{split}\end{equation}
Then, substituting equation \eqref{eq:def_fg2y} to \eqref{eq:qFST_sym_odd_proof_2}, we obtain
\[\begin{split}
	r_{2j-1}(g_j) = g_j\frac{(g_j-b_j)\frac{1}{f_j}+a_j(b_{j-1}\frac{1}{g_{j-1}}-1)\frac{1}{f_{j-1}}}{(g_j-b_j)\frac{1}{f_j}+b_j(b_{j-1}\frac{1}{g_{j-1}}-1)\frac{1}{f_{j-1}}}\quad (j=1,\ldots,n+1).
\end{split}\]
The other actions on the variables $g_1,\ldots,g_n$ can be shown in a similar way.

\subsection{Proof of action \eqref{eq:aFST_sym_rot}}

The action $\pi(f_i)$ is rewritten by the second equation of $q$-$P_{(n+1,n+1)}$ and equation \eqref{eq:qFST_f_b} to
\begin{equation}\begin{split}\label{eq:qFST_sym_rot_proof_1}
	\pi(f_i) &= \frac{q^2}{t}\frac{\underline{y_i}-\underline{y_{i+1}}}{\underline{y_0}-\underline{y_1}} = \frac{q^2}{t}\frac{\underline{x_1}}{\underline{x_{i+1}}}\frac{\frac{g_if_i}{b_i-g_i}y_i-\frac{b_{i+1}f_{i+1}}{b_{i+1}-g_{i+1}}y_{i+1}}{\frac{g_0t}{b_0-g_0}y_0-\frac{b_1f_1}{b_1-g_1}y_1}\quad (i=1,\ldots,n-1), \\
	\pi(f_n) &= \frac{q^2}{t}\frac{\underline{y_n}-\underline{y_{n+1}}}{\underline{y_0}-\underline{y_1}} = q\frac{\underline{x_1}}{\underline{x_0}}\frac{\frac{g_nf_n}{b_n-g_n}y_n-\frac{b_{n+1}t}{b_0-g_0}y_0}{\frac{g_0t}{b_0-g_0}y_0-\frac{b_1f_1}{b_1-g_1}y_1}.
\end{split}\end{equation}
On the other hand, we obtain
\begin{equation}\begin{split}\label{eq:qFST_sym_rot_proof_2}
	\frac{(\frac{q}{t}-1)(x_0-x_1)}{q}\frac{1}{\underline{x_i}} &= \sum_{j=0}^{i-1}\frac{\frac{t}{q}(1-\frac{a_{j+1}}{g_j})(b_j-g_j)}{f_j} + \sum_{j=i}^{n}\frac{(1-\frac{a_{j+1}}{g_j})(b_j-g_j)}{f_j}\quad (i=0,\ldots,n),
\end{split}\end{equation}
from equation \eqref{eq:qFST_f_ab} and
\begin{equation}\begin{split}\label{eq:qFST_sym_rot_proof_3}
	\frac{(\frac{t}{q}-1)(x_0-x_1)}{t}y_i &= \sum_{j=0}^{i-1}\frac{\frac{t}{q}(1-\frac{a_{j+1}}{g_j})(b_j-g_j)}{f_j} \\
	&\quad + \frac{(\frac{t}{q}-\frac{a_{i+1}}{g_i})(b_i-g_i)}{f_i} + \sum_{j=i+1}^{n}\frac{(1-\frac{a_{j+1}}{g_j})(b_j-g_j)}{f_j}\quad (i=0,\ldots,n),
\end{split}\end{equation}
from equation \eqref{eq:def_fg2y}.
Substituting equations \eqref{eq:qFST_sym_rot_proof_2} and \eqref{eq:qFST_sym_rot_proof_3} to \eqref{eq:qFST_sym_rot_proof_1}, we can show the action $\pi(f_i)$.

The action $\pi(g_i)$ is rewritten by equation \eqref{eq:qFST_f_b} to
\begin{equation}\begin{split}\label{eq:qFST_sym_rot_proof_4}
	\pi(g_i) &= \frac{b_{i+1}}{q^{\rho_1}}\frac{y_{i+1}(1+\underline{x_i}y_i)}{y_i(1+\underline{x_{i+1}}y_{i+1})} = \frac{a_{i+1}}{q^{\rho_1}}\frac{\frac{b_{i+1}f_{i+1}}{b_{i+1}-g_{i+1}}y_{i+1}}{\frac{g_if_i}{b_i-g_i}y_i}\quad (i=1,\ldots,n-1), \\
	\pi(g_n) &= \frac{b_{n+1}}{q^{\rho_1}}\frac{y_{n+1}(1+\underline{x_n}y_n)}{y_n(1+\underline{x_{n+1}}y_{n+1})} = \frac{a_{n+1}}{q^{\rho_1}}\frac{\frac{b_{n+1}t}{b_0-g_0}y_0}{\frac{g_nf_n}{b_n-g_n}y_n}.
\end{split}\end{equation}
Substituting equation \eqref{eq:qFST_sym_rot_proof_3} to \eqref{eq:qFST_sym_rot_proof_4}, we can show the action $\pi(g_i)$.

\section*{Acknowledgement}

The author would like to express his gratitude to Professors Masatoshi Noumi and Yasuhiko Yamada for helpful comments and advices.
This work was supported by JSPS KAKENHI Grant Number 15K04911.


\end{document}